\newcolumntype{Y}{>{\centering\arraybackslash}X}
\def\qed{\rule{2mm}{2mm}}
\pgfplotsset{compat=1.12} 
\def\addlegendimage{\csname pgfplots@addlegendimage\endcsname}
\mathchardef\dash="2D
\newtheorem{theorem}{Theorem}[section]
\newtheorem{assumption}{Assumption}[section]
\theoremstyle{definition}
\newtheorem{remark}{Remark}[section]
\begin{document}

\author{
Federico A. Bugni\\
Department of Economics\\
Northwestern University\\
\url{federico.bugni@northwestern.edu}
\and
Jia Li\\
School of Economics\\
Singapore Management University\\
\url{jiali@smu.edu.sg}
\and
Qiyuan Li\\
School of Economics\\
Singapore Management University\\
\url{qyli.2019@phdecons.smu.edu.sg}
}

\title{Permutation-based tests for discontinuities in event studies\thanks{We thank the Coeditor and two anonymous referees for comments and suggestions that have greatly improved the manuscript.}}

\maketitle
\vspace{-1.5cm}
\begin{abstract}
We propose using a permutation test to detect discontinuities in an underlying economic model at a known cutoff point. Relative to the existing literature, we show that this test is well suited for event studies based on time-series data. The test statistic measures the distance between the empirical distribution functions of observed data in two local subsamples on the two sides of the cutoff. Critical values are computed via a standard permutation algorithm. Under a high-level condition that the observed data can be coupled by a collection of conditionally independent variables, we establish the asymptotic validity of the permutation test, allowing the sizes of the local subsamples to be either be fixed or grow to infinity. In the latter case, we also establish that the permutation test is consistent. We demonstrate that our high-level condition can be verified in a broad range of problems in the infill asymptotic time-series setting, which justifies using the permutation test to detect jumps in economic variables such as volatility, trading activity, and liquidity. These potential applications are illustrated in an empirical case study for selected FOMC announcements during the ongoing COVID-19 pandemic.
\end{abstract}

\thispagestyle{empty}

\noindent KEYWORDS: event study, infill asymptotics, jump, permutation tests, randomization tests, semimartingale.

\noindent JEL classification codes: C12, C14, C22, C32.

\newpage
\setcounter{page}{1}


\section{Introduction\label{sec-intro}}

Many econometric problems can be expressed in terms of the continuity or the discontinuity of certain component in the underlying economic model. In an influential
paper, \cite{chow1960} tested the temporal stability in the demand for
automobiles, and subsequently stimulated a large literature on structural
breaks in time series analysis; see, for example, \cite{andrews1993}, \cite%
{stock1994}, \cite{baiperron1998}, and many references therein. In
microeconometrics, the regression discontinuity design (RDD) has been
extensively used for causal inference. This literature identifies and estimates an average treatment effect by evaluating discontinuities of conditional expectation functions of outcome and
treatment variables at a cutoff point of the running variable; see \cite%
{imbenslemieux2008} and \cite{leelemieux2010} for comprehensive reviews.%
\footnote{%
Coincidentally, the RDD was first proposed by 
\cite{TC1960} around the same time as the Chow test.} Meanwhile, a more
recent high-frequency financial econometrics literature has been devoted to
studying discontinuities, or jumps, in various financial time series (e.g.,
price, volatility, trading activity, etc.). The high-frequency jump
literature is pioneered by \cite{BNS06}, who propose the first nonparametric
test for asset price jumps using high-frequency data in an infill asymptotic
setting. More recently, \cite{fomc} study the jumps of volatility and
trading intensity in high-frequency jump regressions (\cite{jur}) that
closely resemble the classical RDD.

Although these strands of literature involve apparently different
terminology and technical tools, they share a common theme: The econometric
goal is to learn about differences in the data generating processes between two subsamples separated by the
cutoff. \cite{imbens2011} emphasize that these subsamples should be
``local'' to the cutoff point, which is
quite natural given the nonparametric nature of discontinuity inference (%
\cite{hahn2001}). The issue under study is thus a local version of the
classical two-sample problem. Correspondingly, the related inference is
often carried out using nonparametric two-sample t-tests, which are based on
kernel regressions in the RDD (\cite{hahn2001}, \cite{imbens2011}, \cite{CCT2014}%
) or, in the same spirit, spot high-frequency estimators (\cite{FN1996}, 
\cite{comterenault1998}, \cite{jacodprotter2012}, \cite{jur}, \cite{fomc})
in the infill time-series setting.

In an ideal scenario in which the subsamples separated by the cutoff are i.i.d.\ and independent of each other, the permutation test is an excellent tool to detect differences in their distributions. In particular, standard results for randomization inference (\citet[Chapter 15.2]{lehmannromano2005}) indicate that a permutation test implemented with any arbitrary test statistic is finite-sample valid under these conditions. The recent literature has investigated the properties of permutation tests {to detect differences between two samples} under less ideal conditions. One example is \cite{canaykamat2017}, who consider an RDD and show that permutation-based inference is asymptotically valid to detect discontinuities in the distribution of the baseline covariates at the cutoff. These authors implement their test with a finite number of observations that are located closest to the cutoff, effectively forcing them to concentrate on a small neighborhood of the cutoff as the sample size grows. In the same spirit, \cite{cattaneo/Tit/VB:17} propose using permutation-based inference to detect discontinuities at the cutoff under the ``local randomization framework'' introduced in \cite{cattaneo/etal:15}. Outside of the RDD literature, \cite{chungromano2013} and \cite{diciccio/romano:2017} investigate the asymptotic properties of permutation-based inference to test for differences in specific distributional features of two samples, such as the mean or the correlation coefficient. It is important to note that all of the references mentioned in this paragraph presume cross-sectional data.


{In the context of time-series applications, there is an active literature on change-point tests implemented via permutations. This approach was first suggested by \cite{antoch/huskova:2001} and later pursued by other authors. See \cite{huskova:2004} or \cite{horvath/rice:2014} for surveys of this literature. While most of this literature imposes independent errors, some allow for limited forms of weak dependence ({\cite{kirch/steinebach:2006}, \cite{kirch:2007}, and \cite{jentsch/pauly:2015}}). In contrast, our econometric setting accommodates essentially unrestricted persistence and nonstationarity in the underlying state processes (e.g., volatility), which better suits our interest on their dynamics over short time windows around economic news events. In the context of machine-learning methods, \cite{chernozhukov/wuthrich/zhu:2018} propose using permutations to implement conformal inference that allows for time series data.}

Set against this background, our main goal in this paper is to establish a
general theory for permutation-based discontinuity tests, with a special
emphasis on event studies based on time-series data. To capture the
``local'' nature of this problem, we adopt
an infill asymptotic framework, under which the inference concentrates on
observations ``close'' to the event time. Specifically, we consider the Cram\'{e}r-von Mises test statistic formed as the squared $L_{2}$ distance
between the empirical cumulative distribution functions for the two local
subsamples near the cutoff, and compute the critical value via a standard
permutation algorithm. As explained earlier, if the data were i.i.d., the behavior of this permutation test would follow directly from standard results for randomization inference. This ``off-the-shelf'' theory, however, is not applicable here
because time-series data observed in a short event window can be
serially highly dependent.

The main theoretical contribution of the present paper is to establish the
asymptotic validity of the permutation test in this non-standard setting. The theory has
two components. The first is a new general
result for permutation test. Specifically, we link the (feasible)
permutation test formed using the original data with an infeasible test
constructed in a ``coupling'' problem that involves
conditionally i.i.d.\ coupling variables. Since the latter
resembles the classical two-sample problem, the infeasible test controls
size exactly under the coupling null hypothesis (i.e., coupling
variables in the two subsamples are homogeneous), and is consistent under the
complementary alternative hypothesis. Under a proper notion of coupling,
which is customized for the permutation test, we show that the
feasible test inherits the same asymptotic rejection properties from the
infeasible one.  Since this result is of independent theoretical
interest that is well beyond our subsequent analysis in the infill
time-series setting, we frame the theory under general high-level
conditions so as to facilitate other types of applications.

The second component of our analysis pertains to specializing the general result to the
infill time-series setting designed for event-study applications. The
event-study framework is particularly relevant for studying macroeconomic
and financial shocks, including monetary shocks triggered by FOMC announcements (\cite{cochranepiazzesi2002}, \cite{nakamura2018QJE}), or \textquotedblleft natural disasters\textquotedblright\ such as
the ongoing COVID-19 pandemic. Following \cite{lixiu2016} and \cite{fomc}, we model
observed data using a general state-space framework, in which the observations are
discretely sampled from a latent state process \textquotedblleft
contaminated\textquotedblright\ by random disturbances. This model has been used to model variables such as asset returns, trading volume, duration, and bid-ask
spread, and readily
accommodates both continuously and discretely valued variables. Under this
state-space model, the temporal discontinuity in the data's distribution is mainly driven by the jump of the latent state process (e.g.,
asset volatility, trading intensity, and propensity of informed trading), which can be
detected by the permutation test. Under easy-to-verify primitive conditions,
we construct coupling variables and apply the aforementioned general theory
to establish the permutation test's asymptotic validity. 

We recognize two advantages of the proposed permutation test in comparison
with the standard approach based on the nonparametric ``spot'' estimation of the underlying state process.
Firstly, the permutation test attains asymptotic size control even if the
number of observations in each subsample is \emph{fixed}.\footnote{For similar type of results in the context of RDD; see \cite{cattaneo/etal:15}, \cite{cattaneo/Tit/VB:17}, \cite{canaykamat2017},  and \cite{bugni/canay:2021}.}
This remarkable property is reminiscent of the finite-sample exactness of the permutation test in the
classical two-sample problem for i.i.d.\ data. In contrast, the nonparametric estimation approach works in a fundamentally different way, as it relies on the asymptotic (mixed) normality
of the estimator, which in turn requires the sizes of the local subsamples to grow to infinity. In
empirical applications, however, it is often desirable to use a short time window,
either to reduce the effect of confounding factors in the background, or
simply because of the lack of observations soon after the occurrence of the
economic event (say, in a real-time research situation). Not surprisingly,
the conventional inference based on asymptotic Gaussianity often results in large
size distortions in this \textquotedblleft small-sample\textquotedblright\
scenario, as we demonstrate concretely in a realistically calibrated Monte
Carlo experiment (see Section \ref{sec:mc}). Meanwhile, the permutation test exhibits much more robust size control in finite samples.

The second advantage of the permutation test is its versatility: The same test can be applied in many different empirical contexts without any modification. On the other hand, the nonparametric estimation approach often relies on specific features of the problem, and needs to be designed on a case-by-case basis. Therefore, the proposed permutation test may be particularly attractive in new empirical environments for which tests based on the conventional approach are not yet developed or not yet well-understood. In Section \ref{sec:t2}, we illustrate this point more concretely in the context of testing for volatility jumps. In that case, the standard approach relies crucially on the assumption that the price shocks are Brownian in its design of the spot volatility estimator and the associated t-statistic, and it cannot be adapted easily to accommodate a more general setting with L\'{e}vy-driven shocks.\footnote{As explained by \cite{bns2001}, these more general processes offer the possibility of capturing important deviations from Brownian shocks and for flexible modelling of dependence structures. However, to the best of our knowledge, the estimation and inference of the spot volatility (i.e., the scaling process) in the non-Brownian case remains to be an open question in the literature. There is some limited work on the inference of integrated volatility functionals for the non-Brownian case (see \cite{todorov2012}) which demonstrates various distinct complications in the non-Brownian setting.} The permutation test, on the other hand, is valid even in the latter, more general, setting.

That being said, we stress that the proposed permutation test is a
complement, rather than substitute, for the conventional nonparametric estimation method,
because it has two limitations. One is that the permutation test focuses
exclusively on hypothesis testing, without producing a point estimate for
the jump of the state process (e.g., volatility) of interest, whereas the estimate is
a by-product of the conventional approach. In addition, the proposed
permutation test is purely nonparametric and it does not exploit any parametric structure that one may be willing to impose. It is therefore conceivable that in
certain semiparametric settings, more efficient tests may be designed to exploit a priori model restrictions. Put differently, the
aforementioned versatility of the permutation test may come with an efficiency
cost.  A better understanding about the robustness-efficiency tradeoff might
be an interesting topic for future research.

{In an empirical illustration, we apply the permutation test to a recent
sample of high-frequency intraday returns of the SPY ETF for the S\&P 500 index. Specifically, we focus on four important FOMC announcements during the ongoing COVID-19 pandemic, and test whether each announcement induces discontinuities in volatility, trading activity, and two measures of market illiquidity. We document robust empirical evidence for discontinuities in volatility and trading activity. We also find evidence for announcement-induced discontinuity in transaction cost (measured by bid-ask spread), but not in market impact (gauged by Amihud's measure).} This application highlights one of the main advantages of the proposed test, namely, it is applicable for a broad variety of high-frequency observations modeled in distinct ways, which is unlike, for example, the conventional t-test designed specifically for testing volatility jumps in the Brownian setting.

The rest of the paper is organized as follows. We present the asymptotic
theory for the permutation test in Section \ref{sec:t}. Section \ref{sec:mc}
reports the test's finite-sample performance in Monte Carlo experiments, and
Section \ref{sec:emp} presents the empirical illustration. Section \ref%
{conclusion} concludes. The appendix contains all proofs.

\emph{Notation}. We use $\left\Vert x\right\Vert $ to denote the Euclidean
norm of a vector $x$. For any real number $a$, we use $\lceil a \rceil$ to denote the smallest integer that is larger than $a$. For any constant $p\geq 1$, $\left\Vert \,\cdot
\,\right\Vert _{p}$ denotes the $L_{p}$ norm for random variables. For two
real sequences $a_{n}$ and $b_{n}$, we write $a_{n}\asymp b_{n}$ if $%
a_{n}/C\leq b_{n}\leq Ca_{n}$ for some finite constant $C\geq 1$.


\section{Theory\label{sec:t}}

\subsection{A general result for the asymptotic validity of permutation
tests \label{sec:t1}}

We first prove a new result that is broadly useful for establishing the
asymptotic validity of permutation tests. Because of its independent
theoretical interest, we develop the theory under high-level conditions. In
Section \ref{sec:t2}, below, we shall specialize this general result in
event-study applications under a more specific infill time-series setting,
for which the existing theory on permutation tests is not applicable.

Consider an array $(Y_{n,i})_{i\in \mathcal{I}_{n}}$ of $\mathbb{R}$-valued
observed variables defined on a probability space $\left( \Omega ,\mathcal{F}%
,\mathbb{P}\right) $, which may be either ``raw'' data or preliminary
estimators. Our econometric goal is to decide whether two subsamples $%
(Y_{n,i})_{i\in \mathcal{I}_{1,n}}$ and $\left( Y_{n,i}\right) _{i\in 
\mathcal{I}_{2,n}}$ have \textquotedblleft significantly\textquotedblright\
different distributions, where $(\mathcal{I}_{1,n},\mathcal{I}_{2,n})$ is a
partition of $\mathcal{I}_{n}\subseteq \mathbb{Z}$.
For ease of exposition, we assume that $%
\mathcal{I}_{1,n}$ and $\mathcal{I}_{2,n}$ contain the same number of
observations, denoted by $k_{n}$.\footnote{\label{foot:kn} All of our
results can be easily extended to the case when $\mathcal{I}_{1,n} $ and $%
\mathcal{I}_{2,n}$ have different sizes, but with the same order of
magnitude.} We stress from the outset that $k_{n}$ may either be fixed or
grow to infinity in the subsequent analysis. As such, our analysis speaks to
not only the classical finite-sample analysis of permutation tests, but also
the large-sample analysis routinely used in econometrics.

To implement the test, we first estimate the empirical cumulative
distribution functions (CDF) for the two subsamples using%
\begin{equation*}
\widehat{F}_{j,n}\left( x\right) \equiv \frac{1}{k_{n}}\sum_{i\in \mathcal{I}%
_{j,n}}1{\left\{ Y_{n,i}\leq x\right\} },\quad j\in \{1,2\}.
\end{equation*}%
We then measure their difference via the Cram\'{e}r--von Mises statistic,
given by%
\begin{equation}
\widehat{T}_{n}\equiv \frac{1}{2k_{n}}\sum_{i\in \mathcal{I}_{n}}\left( 
\widehat{F}_{1,n}\left( Y_{n,i}\right) -\widehat{F}_{2,n}\left(
Y_{n,i}\right) \right) ^{2}.
\label{eq:CVM}
\end{equation}%
For a significance level $\alpha \in \left( 0,1\right) $, we compute the
critical value via a standard permutation algorithm as in \citet[page
633]{lehmannromano2005}, which we specify in Algorithm 1 below. We use $\pi $
to denote a permutation of the elements of $\mathcal{I}_{n}$, that is, a
bijective mapping from $\mathcal{I}_{n}$ to itself. Let $G_{n}$ denote the
collection of all possible permutations of $\mathcal{I}_{n}$, with $M_{n}$
being its cardinality.

\medskip

\noindent \textbf{Algorithm 1}. Step 1. For each permutation $\pi \in G_n $,
compute the permuted test statistic $\widehat{T}_{n}(\pi )$ as $\widehat{T}%
_{n}$, but with $(Y_{n,i})_{i\in \mathcal{I}_{n}}$ replaced by $(Y_{n,\pi
\left( i\right) })_{i\in \mathcal{I}_{n}}$.

\noindent Step 2. Order $\{\widehat{T}_{n}(\pi ):\pi \in G_{n}\}$ as $%
\widehat{T}_{n}^{(1)}\leq \cdots \leq \widehat{T}_{n}^{(M_{n})}$. Set $%
\widehat{T}_{n}^{\ast }=\widehat{T}_{n}^{(k)}$ for $k=\left\lceil
M_{n}(1-\alpha )\right\rceil $.

\noindent Step 3. If $\widehat{T}_{n}>\widehat{T}_{n}^{\ast }$, reject the
null hypothesis. If $\widehat{T}_{n}<\widehat{T}_{n}^{\ast }$, do not reject
the null hypothesis. If $\widehat{T}_{n}=\widehat{T}_{n}^{\ast }$, reject
the null hypothesis with probability $\hat{p}_{n}\equiv (M_{n}\alpha -%
\widehat{M}_{n}^{+})/\widehat{M}_{n}^{0}$, where $\widehat{M}_{n}^{+}$ and $%
\widehat{M}_{n}^{0}$ are the cardinalities of $\{j:\widehat{T}_{n}^{(j)}>%
\widehat{T}_{n}^{\ast }\}$ and $\{j:\widehat{T}_{n}^{(j)}=\widehat{T}%
_{n}^{\ast }\}$, respectively. The resulting test then rejects according to the critical function $%
\hat{\phi}_{n}\equiv 1\{\widehat{T}_{n}>\widehat{T}_{n}^{\ast }\}+\hat{p}%
_{n}1\{\widehat{T}_{n}=\widehat{T}_{n}^{\ast }\}$.\hfill $\Box $

\medskip

\begin{remark}
\label{rem:RandomPerm} The test specified in Algorithm 1 is
a randomized test and has a random outcome when $\widehat{T}_{n}=\widehat{T}%
_{n}^{\ast }$ { (i.e., it rejects the null hypothesis with probability $\hat{p}_n$ independent of the data); see Section 3.1 in \cite{lehmannromano2005} for details about randomized tests.}  One can construct a non-randomized (and more conservative)
version by replacing $\hat{p}_{n}$ with zero. Also, in practice, $M_{n}$ may
be too large to consider $G_n$ in its entirety. In such cases, we could
replace ${G}_n$ with a random subset of it, denoted by $\widehat{G}_n$, and
composed of the identity permutation and an i.i.d.\ sample of permutations
in ${G}_n$. All of the formal results in this paper apply if we use $%
\widehat{G}_n$ instead of ${G}_n$ in Algorithm 1. 
\end{remark}

\begin{remark}\label{rem:otherStat}
In this paper, we use the Cram\'er-von Mises statistic in \eqref{eq:CVM} for concreteness and simplicity of exposition. However, the supplement of this paper shows that our main results extend well beyond this particular statistic. In particular, the asymptotic validity of the permutation test extends to any other rank statistic, i.e., a statistic that only depends on the rank of the observations. Furthermore, our consistency result applies to any other rank statistic under mild regularity conditions. For example, both of these results hold if we replace \eqref{eq:CVM} with the Kolmogorov-Smirnov statistic, given by
\begin{equation*}
\widehat{T}_{n}\equiv \max_{i\in \mathcal{I}_{n}}\left| 
\widehat{F}_{1,n}\left( Y_{n,i}\right) -\widehat{F}_{2,n}\left(
Y_{n,i}\right) \right|.
\end{equation*}
For details, see the supplement of this paper.
\end{remark}

If the data $(Y_{n,i})_{i\in \mathcal{I}_{n}}$ are i.i.d., then the null
hypothesis of the classical two-sample problem holds, and \citet[Theorem
15.2.1]{lehmannromano2005} implies that the aforementioned permutation test
has \emph{exact} size control in finite samples. This is a remarkable
property of the permutation test, as it holds without requiring any specific
distributional assumptions on the data. In contrast to the classical
two-sample problem, however, we shall not assume that the data are
independent, or even \textquotedblleft weakly\textquotedblright\ dependent
(e.g., mixing). As mentioned in the Introduction, the main goal of this
paper is to study the permutation test for time-series data observed within
a short event window (say, a few days or hours), which can be serially
highly dependent in practice. Our key theoretical insight is that the
permutation test is still asymptotically valid if the data $(Y_{n,i})_{i\in 
\mathcal{I}_{n}}$ can be approximated, or \textquotedblleft
coupled,\textquotedblright\ by another collection of variables that are
conditionally independent, as formalized by the following assumption.

\begin{assumption}
\label{as:c}There exists a collection of variables $\left( U_{n,i}\right)
_{i\in \mathcal{I}_{n}}$ such that the following conditions hold for a
sequence $(\mathcal{G}_{n})_{n\geq 1}$ of $\sigma $-fields: (i) for each $%
n\geq 1$, the variables $(U_{n,i})_{i\in \mathcal{I}_{n}}$ are $\mathcal{G}%
_{n}$-conditionally independent, and $U_{n,i}$ has the same $\mathcal{G}_{n}$%
-conditional distribution as $U_{n,j}$ if $i,j$ belong to the same subsample
(i.e., $\mathcal{I}_{1,n}$ or $\mathcal{I}_{2,n}$); (ii) for any real
sequence $\eta _{n}=o(1)$, we have $\sup_{x\in \mathbb{R}}\mathbb{P}\left(
\left\vert U_{n,i}-x\right\vert \leq \eta _{n}|\mathcal{G}_{n}\right)
=O_{p}\left( \eta _{n}\right) $; (iii) $\max_{i\in \mathcal{I}_{n}}|%
\widetilde{Y}_{n,i}-U_{n,i}|=o_{p}( k_{n}^{-2}) $, where $(\widetilde{Y}%
_{n,i})_{i\in \mathcal{I}_{n}}$ is an identical copy of $(Y_{n,i})_{i\in 
\mathcal{I}_{n}}$ in $\mathcal{G}_{n}$-conditional distribution.
\end{assumption}

Assumption \ref{as:c} lays out the high-level structure for bridging our
analysis with the classical theory on permutation tests, which we carry out
in Theorem \ref{thm1} below. Condition (i) sets up the \textquotedblleft
coupling\textquotedblright\ problem, which corresponds to a conditional
version of the classical two-sample problem, treating the $(U_{n,i})_{i\in 
\mathcal{I}_{1,n}}$ and $(U_{n,i})_{i\in \mathcal{I}_{2,n}}$ variables as
\textquotedblleft data\textquotedblright. In part (a) of Theorem \ref{thm1}%
, we consider the situation in which both subsamples have the same
conditional distribution. In this case, our coupling variables $%
(U_{n,i})_{i\in \mathcal{I}_{n}}$ give rise to an infeasible permutation
test that can be analyzed as a classical two-sample problem. In particular,
this infeasible permutation test attains the exact finite-sample size under
our conditions.

This infeasible test, however, only plays an auxiliary role in our analysis,
because our interest is on the feasible test $\hat{\phi}_{n}$ formed using
the original $(Y_{n,i})_{i\in \mathcal{I}_{n}}$ data. Therefore, a key
component of our theoretical argument in Theorem \ref{thm1} is to show that
the feasible test for the original data inherits asymptotically the same
rejection properties from the infeasible test. Conditions (ii) and (iii) in
Assumption \ref{as:c} are introduced for this purpose. Specifically,
condition (ii) requires the variable $U_{n,i}$ to be non-degenerate, in the
sense that its conditional probability mass within any small $\left[ x-\eta
,x+\eta \right] $ interval is of order $O\left( \eta \right) $ in
probability.\footnote{%
{Condition (ii) is satisfied if $%
U_{n,i}$ has a conditional probability density that is uniformly bounded in probability.}}
Condition (iii) specifies the requisite approximation accuracy of the
coupling variables. We note that this condition is easier to hold when $k_n$ is smaller, because the joint coupling requirement would involve a smaller number of variables and the $o_p(k_n^{-2})$ error bound is easier to attain. This condition can be verified under more primitive
conditions pertaining to the smoothness of underlying processes and {an upper bound on the
growth rate of $k_{n}$}, as detailed in Section \ref{sec:t2}.\footnote{%
In our applications, we can often verify condition (iii) with $\widetilde{Y}%
_{n,i}=Y_{n,i}$. Nonetheless, allowing $\widetilde{Y}_{n,i}\neq Y_{n,i}$ is
useful when $Y_{n,i}$ is itself an estimator. For example, if $%
(Y_{n,i})_{i\in \mathcal{I}_{n}}$ is a finite collection of estimators that
converge jointly in distribution, then the coupling required in Assumption \ref{as:c}(iii) can be obtained via Skorokhod representation. As  such, the theory developed in  \cite{canaykamat2017} can be cast in our framework with $\mathcal{G}_n$ being the trivial information set and $k_n$ being a fixed constant, although our anti-concentration condition in Assumption \ref{as:c}(ii) is slightly stronger than the continuity condition in \citet[Assumption 4.2]{canaykamat2017}.}

\begin{theorem}
\label{thm1}Under Assumption \ref{as:c}, the following statements hold for
the permutation test $\hat{\phi}_{n}$ described in Algorithm 1:

(a) If the variables $(U_{n,i})_{i\in \mathcal{I}_{n}}$ have the same $\mathcal{G}_{n}$%
-conditional distribution, we have $\mathbb{E}[\hat{\phi}_{n}]\rightarrow \alpha 
$.

(b) Let $Q_{j,n}\left( \cdot \right) $ denote the $\mathcal{G}_{n}$%
-conditional distribution function of $U_{n,i}$ for $i\in \mathcal{I}_{j,n}$
and $j\in \left\{ 1,2\right\} $, and $\overline{Q}_{n}=(Q_{1,n}+Q_{2,n})/2$.
If $k_{n}\rightarrow \infty $ and $\mathbb{P}(\int \left(
Q_{1,n}(x)-Q_{2,n}(x)\right) ^{2}d\overline{Q}_{n}\left( x\right) >\delta
_{n})\rightarrow 1$ for any real sequence $\delta _{n}=o(1)$, we have $%
\mathbb{E}[\hat{\phi}_{n}]\rightarrow 1$.
\end{theorem}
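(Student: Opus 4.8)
The plan is to relate the feasible test $\hat\phi_n$, computed from the observed data $(Y_{n,i})_{i\in\mathcal I_n}$ via Algorithm~1, to the \emph{infeasible} permutation test $\hat\phi_n^{U}$ obtained by running the same algorithm on the coupling variables $(U_{n,i})_{i\in\mathcal I_n}$, and then to analyze the infeasible test by classical randomization arguments conditional on $\mathcal G_n$. The first step is a coupling reduction showing $\mathbb E[\hat\phi_n]=\mathbb E[\hat\phi_n^{U}]+o(1)$. The key observation is that each $\widehat T_n(\pi)$ depends on the data only through the pairwise comparisons $1\{Y_{n,a}\le Y_{n,b}\}$, $a,b\in\mathcal I_n$, hence only through the ranks of $(Y_{n,i})$ in the absence of ties. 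By Assumption~\ref{as:c}(iii) one can pick a deterministic $\eta_n=o(k_n^{-2})$ that dominates $\varepsilon_n:=\max_{i}|\widetilde Y_{n,i}-U_{n,i}|$ with probability tending to one. Conditioning on $\mathcal G_n$ and on one of any two coupling variables (using part (i)), Assumption~\ref{as:c}(ii) gives $\mathbb P(\exists\,i\neq j:\ |U_{n,i}-U_{n,j}|\le 2\eta_n\mid\mathcal G_n)\le (2k_n)^{2}\,O_p(\eta_n)=o_p(1)$, so that, by bounded convergence, with probability tending to one all $U_{n,i}$ are separated by more than $2\eta_n$ while $\varepsilon_n\le\eta_n$. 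On this event the strict order of $(\widetilde Y_{n,i})$ coincides with that of $(U_{n,i})$, whence $\widehat T_n^{\widetilde Y}(\pi)=\widehat T_n^{U}(\pi)$ for all $\pi\in G_n$ and the entire output of Algorithm~1 satisfies $\hat\phi_n^{\widetilde Y}=\hat\phi_n^{U}$. Since $(\widetilde Y_{n,i})$ has the same distribution as $(Y_{n,i})$ by Assumption~\ref{as:c}(iii), this yields $\mathbb E[\hat\phi_n]=\mathbb E[\hat\phi_n^{\widetilde Y}]=\mathbb E[\hat\phi_n^{U}]+o(1)$.

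For part~(a), conditional on $\mathcal G_n$ the variables $(U_{n,i})_{i\in\mathcal I_n}$ are i.i.d., hence exchangeable, so their $\mathcal G_n$-conditional law is invariant under the full permutation group $G_n$. Algorithm~1 is then a randomization test for this group based on an arbitrary statistic, so \citet[Theorem 15.2.1]{lehmannromano2005} gives $\mathbb E[\hat\phi_n^{U}\mid\mathcal G_n]=\alpha$ exactly; integrating and invoking the reduction gives $\mathbb E[\hat\phi_n]\to\alpha$.

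For part~(b) it remains to show $\mathbb E[\hat\phi_n^{U}]\to1$, which follows once $\mathbb P(\widehat T_n^{U}>\widehat T_n^{U,\ast})\to1$, where $\widehat T_n^{U,\ast}$ is the critical value produced by Step~2 of Algorithm~1 applied to the $U$'s. On the one hand, since $k_n\to\infty$, a triangular-array law of large numbers together with the distribution-free Glivenko--Cantelli property of the indicator class yields $\widehat T_n^{U}=\int(Q_{1,n}-Q_{2,n})^2\,d\overline Q_n+o_p(1)$, and the displayed hypothesis keeps the leading term bounded away from zero in probability. On the other hand, let $\pi^{\ast}$ be uniform on $G_n$ and independent of the data, and let $\widehat F_{n}^{U}$ be the pooled empirical CDF of $(U_{n,i})_{i\in\mathcal I_n}$; since the permuted difference $\widehat F_{1,n}^{U}(\pi^{\ast};\cdot)-\widehat F_{2,n}^{U}(\pi^{\ast};\cdot)$ equals $2\bigl(\widehat F_{1,n}^{U}(\pi^{\ast};\cdot)-\widehat F_{n}^{U}\bigr)$ and $\widehat F_{1,n}^{U}(\pi^{\ast};\cdot)$ is the empirical CDF of a sample of size $k_n$ drawn without replacement from the $2k_n$ pooled points, a Dvoretzky--Kiefer--Wolfowitz-type bound for sampling without replacement gives $\mathbb E\bigl[\min(\widehat T_n^{U}(\pi^{\ast}),1)\mid\text{data}\bigr]=O_p(k_n^{-1})$; Markov's inequality applied to the conditional law of $\pi^{\ast}$ then shows that for every $t>0$ the fraction of $\pi\in G_n$ with $\widehat T_n^{U}(\pi)>t$ is $o_p(1)<\alpha$ eventually, i.e.\ $\widehat T_n^{U,\ast}\to_p0$. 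Combining the two displays gives $\mathbb P(\widehat T_n^{U}>\widehat T_n^{U,\ast})\to1$, so $\hat\phi_n^{U}\to_p1$ and $\mathbb E[\hat\phi_n]\to1$ by the reduction.

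The main obstacle is the coupling reduction. The Cram\'er--von Mises statistic is a discontinuous functional of the data (it is built from indicator functions), so one must rule out that the coupling error flips \emph{any} of the $\asymp k_n^{2}$ pairwise comparisons among the $2k_n$ observations. This is precisely why Assumption~\ref{as:c} imposes the stringent $o_p(k_n^{-2})$ coupling rate---to absorb a union bound over all pairs---and the anti-concentration condition~(ii)---to make that union bound vanish. Once the reduction is in place, parts~(a) and~(b) reduce to the classical exactness and consistency of the two-sample permutation test, executed conditionally on $\mathcal G_n$.
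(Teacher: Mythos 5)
Your overall architecture --- reduce the feasible test to the infeasible one run on the coupling variables via a rank-agreement event, then treat the latter as a classical conditional two-sample permutation test --- is exactly the paper's, and your coupling reduction and part (a) match the paper's Step 1 and Step 2(a) essentially line for line (the paper bounds $\mathbb{P}(E_n^c)$ through the pairwise events $\{U_{n,j}-U_{n,i}\ge 0,\ \widetilde Y_{n,j}-\widetilde Y_{n,i}<0\}$ rather than through a global separation event, but the union bound and the use of conditions (ii)--(iii) are the same). Where you genuinely diverge is the analysis of the permutation critical value in part (b). The paper constructs a second coupling $(U_{n,i}')$ à la \cite{chungromano2013} --- conditionally i.i.d.\ draws from $\overline Q_n$ agreeing with the $U_{n,i}$ except on $O_p(k_n^{1/2})$ indices --- verifies Hoeffding's condition, and invokes \citet[Theorem 15.2.3]{lehmannromano2005} to conclude that the permutation distribution degenerates at zero. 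You instead exploit the identity $\widehat F^U_{1,n}(\pi;\cdot)-\widehat F^U_{2,n}(\pi;\cdot)=2\bigl(\widehat F^U_{1,n}(\pi;\cdot)-\widehat F^U_n(\cdot)\bigr)$ and the fact that under a uniform random permutation the first subsample is a without-replacement draw from the pooled points, so the conditional mean of $\widehat T^U_n(\pi^*)$ is $O(k_n^{-1})$; Markov plus the order-statistic definition of $\widehat T^{U,*}_n$ then gives $\widehat T^{U,*}_n\to_p 0$. This is correct (indeed, you do not even need a uniform DKW-type bound: the statistic is an average of squared pointwise deviations, each with hypergeometric conditional variance $O(k_n^{-1})$, so Fubini suffices), and it is more elementary and self-contained than the paper's route; what the paper's argument buys is the standard Hoeffding-condition machinery, which identifies the limit of the whole permutation distribution rather than only bounding its upper quantile, and which the authors flag as the device that makes a consistency result available where, e.g., \cite{canaykamat2017} could not obtain one.
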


Theorem \ref{thm1} characterizes the asymptotic rejection probabilities of
the feasible test $\hat{\phi}_{n}$ under the null and alternative hypotheses
of the two-sample problem for the coupling variables. Part (a) pertains to
the situation in which the two subsamples of coupling variables, $%
(U_{n,i})_{i\in \mathcal{I}_{1,n}}$ and $(U_{n,i})_{i\in \mathcal{I}_{2n}}$,
have the same conditional distribution, which corresponds to the null
hypothesis. In this case, the theorem shows that the asymptotic rejection
probability of the feasible test is equal to the nominal level $\alpha $. It
is relevant to note that this result holds whether $k_n$ is fixed or
divergent. This property is clearly reminiscent of the permutation test's
finite-sample exactness in the classical setting.

Part (b) of Theorem \ref{thm1} concerns the power of the feasible test $\hat{%
\phi}_{n}$. It shows that the feasible test rejects with probability
approaching one when the conditional distributions of the two coupling
subsamples, $Q_{1,n}$ and $Q_{2,n}$, are different, in the sense that their
\textquotedblleft distance\textquotedblright\ measured by $\int \left(
Q_{1,n}(x)-Q_{2,n}(x)\right) ^{2}d\overline{Q}_{n}\left( x\right) $ is
asymptotically non-degenerate, where the mixture distribution $\overline{Q}%
_{n}$ captures approximately the distribution of the permuted data.\footnote{When the two subsamples have different sizes, the same result goes through with $\overline{Q}_n$ defined as the sample-size weighted average between $Q_{1,n}$ and $Q_{2,n}$, {given by $\overline{Q}_n=(Q_{1,n}|\mathcal{I}_{1,n}| + Q_{2,n}|\mathcal{I}_{2,n}|)/(|\mathcal{I}_{1,n}|+|\mathcal{I}_{2,n}|)$.}} This
consistency-type result requires that the information available from each
subsample grows with the sample size, i.e., $k_{n}\rightarrow \infty $. This
result appears to be new in the context of permutation-based tests under a
fixed alternative for the coupling variables. In particular, we note that an
analogous result is unavailable in \cite{canaykamat2017}, as they restrict
attention to an asymptotic framework with a fixed $k_n$, which makes a
consistency-type result unavailable. Our proof relies on applying %
\citet[Theorem 15.2.3]{lehmannromano2005} to the infeasible test, for which
we use the coupling construction developed by \cite{chungromano2013} to show
that the so-called \cite{hoeffding:1952} condition is satisfied. We note
that this argument is used to establish the consistency of the permutation
test rather than its asymptotic size property.

Theorem \ref{thm1} establishes the relation between the rejection
probability of the feasible test $\hat{\phi}_{n}$ and the homogeneity (or
the lack of it) across the two coupling subsamples $(U_{n,i})_{i\in \mathcal{%
I}_{1,n}}$ and $(U_{n,i})_{i\in \mathcal{I}_{2,n}}$. This result does not
speak directly to hypotheses formulated in terms of the original $\left(
Y_{n,i}\right) _{i\in \mathcal{I}_{n}}$ observations. Rather, its
theoretical significance is to \textquotedblleft absorb\textquotedblright\
all generic technicalities stemming from the (feasible) permutation test,
which in turn considerably simplifies our overall analysis. The residual
issue for any specific application is to explicitly construct the coupling
variables and translate their homogeneity in terms of the primitive
structures of the original empirical problem, which can be done using
domain-specific techniques. We provide general results along this line in
the infill time-series context, as detailed in Section \ref{sec:t2} below.

To help anticipate the general discussion, it is instructive to sketch the
scheme in a basic running example. Let $Y_{n,i}=\Delta _{n}^{-1/2}(P_{\left(
i+1\right) \Delta _{n}}-P_{i\Delta _{n}})$ be the scaled increment of the
asset price process $P_{t}$ over the $i$th sampling interval.
Let $\tau $ be a \textquotedblleft cutoff\textquotedblright\ time point of
interest that is known to the researcher (e.g., the announcement time of a news release) and $i^{\ast }$ be
the unique integer such that $\tau \in \lbrack i^{\ast }\Delta
_{n},(i^{\ast }+1)\Delta _{n})$.\footnote{%
The integer $i^{\ast }$ depends on $n$. We
suppress this dependence in our notation for simplicity and to avoid having nested subscripts.}
We consider two index sets $\mathcal{I}_{1,n}=\{i^{\ast }-k_{n},\ldots
,i^{\ast }-1\}$ and $\mathcal{I}_{2,n}=\{i^{\ast }+1,\ldots ,i^{\ast
}+k_{n}\}$, which collect observations before and after the cutoff,
respectively. We consider an asymptotic setting in which these subsamples
are \textquotedblleft local\textquotedblright\ in calendar time, that is, $%
k_{n}\Delta _{n}\rightarrow 0$. Note that this implies that $\Delta
_{n}\rightarrow 0$, which means that we are considering an infill asymptotic
setting. If $P_{t}$ is an It\^{o} process with respect to an information
filtration $(\mathcal{F}_{t})_{t\geq 0}$, we may represent $Y_{n,i}$ as%
\begin{equation}
Y_{n,i}=\Delta _{n}^{-1/2}\int_{i\Delta _{n}}^{\left( i+1\right) \Delta
_{n}}b_{s}ds+\Delta _{n}^{-1/2}\int_{i\Delta _{n}}^{\left( i+1\right) \Delta
_{n}}\sigma _{s}dW_{s},\quad \text{for}\quad i\in \mathcal{I}_{n},
\label{ysimple}
\end{equation}%
where $b_{t}$ is the drift process, $\sigma _{t}$ is the stochastic
volatility process, and $W_{t}$ is a standard Brownian motion.\footnote{%
Note that $\mathcal{I}_{n}$ does not include the $i^{\ast }$th return
observation. Therefore, although the returns in (\ref{ysimple}) do not
contain price jumps, an event-induced price jump is allowed to occur at time 
$\tau $.} If the $\sigma _{t}$ process is smooth (e.g., H\"{o}lder
continuous) in a local neighborhood before $\tau $, then the volatility
throughout the pre-event subsample $\mathcal{I}_{1,n}$ is approximately $%
\sigma _{(i^{\ast }-k_{n})\Delta _{n}}$. Further recognizing that the drift
term is negligible relative to the Brownian component, we can approximate $%
Y_{n,i}$ for each $i\in \mathcal{I}_{1,n}$ using the coupling variables 
\begin{equation}
U_{n,i}=\sigma _{(i^{\ast }-k_{n})\Delta _{n}}\Delta _{n}^{-1/2}(W_{\left(
i+1\right) \Delta _{n}}-W_{i\Delta _{n}})\sim \mathcal{MN}\left( 0,\sigma
_{(i^{\ast }-k_{n})\Delta _{n}}^{2}\right) ,  \label{usimple}
\end{equation}%
where $\mathcal{MN}$ denotes the mixed normal distribution. Since the
Brownian motion has independent and stationary increments, it is easy to see
that the coupling variables $(U_{n,i})_{i\in \mathcal{I}_{1,n}}$ are $%
\mathcal{F}_{(i^{\ast }-k_{n})\Delta _{n}}$-conditionally i.i.d. Moreover,
if the volatility process $\sigma _{t}$ does not jump at the cutoff time $%
\tau $, we may follow the same logic to extend the approximation in (\ref%
{usimple}) further to $i\in \mathcal{I}_{2,n}$. In other words, if the
volatility process process does not jump then the coupling variables $%
(U_{n,i})_{i\in \mathcal{I}_{n}}$ are conditionally i.i.d., which
corresponds to the situation in part (a) of Theorem \ref{thm1}. On the other
hand, if the volatility process jumps at time $\tau $, say by a constant $%
c\neq 0$, then the coupling variables for the $\mathcal{I}_{2,n}$ subsample
will instead take the form $U_{n,i}=\left( \sigma _{(i^{\ast }-k_{n})\Delta
_{n}}+c\right) (W_{\left( i+1\right) \Delta _{n}}-W_{i\Delta _{n}})$. In
this case, the two subsamples of $U_{n,i}$'s have distinct conditional
distributions (i.e., mixed normal with different conditional variances),
corresponding to the scenario in part (b) of Theorem \ref{thm1}.

Within the context of this illustrative example, we can further clarify a
key feature of the proposed test that holds more generally. It is not aimed
at detecting \textquotedblleft small\textquotedblright\ time-variations in
the distribution of the observed data. In fact, by allowing the drift $b_{t}$
and the volatility $\sigma _{t}$ to be time-varying, a smooth form of
heterogeneity is \emph{always} built in. The test instead detects abrupt
changes, or discontinuities, in the evolution of the distribution, which can
be more plausibly associated with the \textquotedblleft
lumpy\textquotedblright\ information carried by the underlying economic
announcement, as emphasized by \cite{nakamura2018JEP}. Specifically in this
example, the asset returns are locally centered Gaussian (due to the
assumption that the price is an It\^{o} process), and hence, the temporal
discontinuity in the return distribution manifests itself as a volatility
jump. The empirical scope of our permutation test, however, is far beyond
volatility-jump testing depicted in this illustration, as we shall
demonstrate in the remainder of the paper.

\subsection{Permutation tests for discontinuities in event studies\label%
{sec:t2}}

We now specialize the general Theorem \ref{thm1} into an infill asymptotic
time-series setting that is particularly suitable for event studies. By
introducing a mild additional econometric structure, we shall establish the
asymptotic validity of the permutation test under more primitive conditions
that are easy to verify in a variety of concrete empirical settings. As in
the running example above, we consider an event occurring at time $\tau
\in [i^{\ast }\Delta _{n},(i^{\ast }+1)\Delta _{n})$, which separates two subsamples indexed by $\mathcal{I%
}_{1,n}=\{i^{\ast }-k_{n},\ldots ,i^{\ast }-1\}$ and $\mathcal{I}%
_{2,n}=\{i^{\ast }+1,\ldots ,i^{\ast }+k_{n}\}$, respectively. All limits in
the sequel are obtained under the infill asymptotic setting with $\Delta
_{n}\rightarrow 0$.

We suppose that the data are generated from an approximate state-space model
of the form%
\begin{equation}
Y_{n,i}=g\left( \zeta _{i\Delta _{n}},\epsilon _{n,i}\right) +R_{n,i},\quad
i\in \mathcal{I}_{n},  \label{yss}
\end{equation}%
where the state process $\zeta _{t}$ is c\`{a}dl\`{a}g, adapted to a
filtration $\mathcal{F}_{t}$, and takes values in an open set $\mathcal{Z}%
\subseteq \mathbb{R}^{\dim (\zeta )}$; $(\epsilon _{n,i})_{i\in \mathcal{I}%
_{n}}$ are i.i.d.\ random disturbances taking values in some (possibly
abstract) space $\mathcal{E}$; $g\left( \cdot ,\cdot \right) $ is a
\textquotedblleft smooth\textquotedblright\ transform; and $R_{n,i}$ is a
residual term that is negligible relative to the leading term $g\left( \zeta
_{i\Delta _{n}},\epsilon _{n,i}\right) $ in a proper sense detailed below. A
simpler version of this state-space model without the $R_{n,i}$ residual
term has been used by \cite{lixiu2016} and \cite{fomc}, among others, for
modeling market variables such as trading volume and bid-ask spread. By
introducing the $R_{n,i}$ {residual} term, we can use a unified framework to
accommodate a broader class of models, which in particular include
increments of an It\^{o} semimartingale. We now revisit the model in (\ref%
{ysimple}) as the first illustration.

\bigskip

\noindent \textsc{Example 1 (Brownian Asset Returns)}. We represent the It%
\^{o}-process model (\ref{ysimple}) for asset returns in the form of (\ref%
{yss}) by setting $\zeta _{t}=\sigma _{t}$, $\epsilon _{n,i}=\Delta
_{n}^{-1/2}(W_{\left( i+1\right) \Delta _{n}}-W_{i\Delta _{n}})$, and $%
g\left( z,\epsilon \right) =z\epsilon $. The resulting residual term has the form%
\begin{equation}
R_{n,i}=\Delta _{n}^{-1/2}\int_{i\Delta _{n}}^{\left( i+1\right) \Delta
_{n}}b_{s}ds+\Delta _{n}^{-1/2}\int_{i\Delta _{n}}^{\left( i+1\right) \Delta
_{n}}(\sigma _{s}-\sigma _{i\Delta _{n}})dW_{s},\quad i\in \mathcal{%
I}_{n}.  \label{ex1:R}
\end{equation}%
Under mild and fairly standard regularity conditions, it is easy to show
that $\max_{i \in \mathcal{I}_n}|R_{n,i}|$ is $o_p(1)$. On the other hand, the
leading term $g\left( \zeta _{i\Delta _{n}},\epsilon _{n,i}\right) $ has a
non-degenerate centered mixed Gaussian distribution with conditional
variance $\sigma _{i\Delta _{n}}^{2}$.\hfill $\square $

\bigskip

This running example further illustrates the distinct roles played by $%
\zeta_t$, $\epsilon_{n,i}$, and $R_{n,i}$ in our state-space model (\ref{yss}%
). The leading term $g\left( \zeta _{i\Delta _{n}},\epsilon _{n,i}\right) $
captures the ``main feature'' of the observed data; in addition, since the $%
\epsilon _{n,i}$ disturbance terms are i.i.d., any \textquotedblleft
large\textquotedblright\ change in the empirical distribution across the two
subsamples must be attributed to the time-$\tau $ discontinuity in the state
process $\zeta _t$. From this description, it follows that the hypothesis
test for the continuity of the distribution of the main feature of the
observed data can be formulated as%
\begin{equation}
H_{0}:\Delta \zeta _{\tau }=0\quad \text{versus}\quad H_{a}:\Delta \zeta
_{\tau }\neq 0,  \label{hypo}
\end{equation}%
where $\Delta \zeta _{\tau }\equiv \zeta _{\tau }-\zeta _{\tau -}\equiv
\zeta _{\tau }-\lim_{s\uparrow \tau }\zeta _{s}$ denotes the jump of the
state process at time $\tau $.

With the state-space model (\ref{yss}) in place, we can design more
primitive sufficient conditions for establishing the asymptotic validity of
the permutation test under the hypotheses in (\ref{hypo}). We need some
additional notation to describe these conditions. For each fixed $z\in 
\mathcal{Z}$, let $f_{z}\left( \cdot \right) $ and $F_{z}\left( \cdot
\right) $ denote the probability density function (PDF) and the CDF of the
random variable $g\left( z,\varepsilon _{n,i}\right) $, respectively. It is
also convenient to introduce a \textquotedblleft shifted\textquotedblright\
version of $\zeta _{t}$ defined as $\tilde{\zeta}_{t}\equiv \zeta
_{t}-\Delta \zeta _{\tau }1\{t\geq \tau \}$, which has the same increments
as $\zeta _{t}$ over time intervals not containing $\tau $.

\begin{assumption}
\label{as:ind}(i) The collection of variables $(\epsilon _{n,i})_{i\in 
\mathcal{I}_{n}}$ are i.i.d.\ and, for each $k\in \mathcal{I}_{n}$, the
variables $(\epsilon _{n,i})_{i\geq k}$ are independent of $\mathcal{F}%
_{k\Delta _{n}}$. Moreover, for any compact subset $\mathcal{K}\subseteq 
\mathcal{Z}$, we have (ii) $\sup_{x\in \mathbb{R},z\in \mathcal{K}%
}f_{z}(x)<\infty $; and (iii) $\inf_{z\in \mathcal{K}}\int_{\mathbb{R}%
}\left( F_{z}(x)-F_{z+c}(x)\right) ^{2}dF_{z}(x)>0$ whenever $c\neq 0$.
\end{assumption}

\begin{assumption}
\label{as:smooth}There exist a sequence $\left( T_{m}\right) _{m\geq 1}$ of
stopping times increasing to infinity, a sequence of compact subsets $(%
\mathcal{K}_{m})_{m\geq 1}$ of $\mathcal{Z}$, and a sequence $\left(
K_{m}\right) _{m\geq 1}$ of constants such that for some real sequence $%
a_{n}\geq 1$ and each $m\geq 1$: (i) $\left\Vert g\left( z,\epsilon
_{n,i}\right) -g(z^{\prime },\epsilon _{n,i})\right\Vert _{2}\leq
K_{m}a_{n}\left\Vert z-z^{\prime }\right\Vert $ for all $z,z^{\prime }\in 
\mathcal{K}_{m}$; (ii) $\zeta _{t}$ takes values in $\mathcal{K}_{m}$ for
all $t\leq T_{m}$, and $\Vert \tilde{\zeta}_{t\wedge T_{m}}-\tilde{\zeta}%
_{s\wedge T_{m}}\Vert _{2}\leq K_{m}\left\vert t-s\right\vert ^{1/2}$ for
all $t,s$ in some fixed neighborhood of $\tau $; (iii) $\max_{i\in \mathcal{I%
}_{n}}\left\vert R_{n,i}\right\vert =o_{p}(k_{n}^{-2})$.
\end{assumption}

Assumption \ref{as:ind} entails regularity conditions pertaining to the
random disturbance terms, which are often easy to verify in concrete
examples as demonstrated later in this subsection. Assumption \ref{as:smooth}
imposes a set of smoothness conditions that permits the approximation of the
observed data using properly constructed coupling variables.\footnote{%
Note that the assumption is framed in a localized fashion using the stopping
times $(T_{m})_{m\geq 1}$, which is a standard technique for weakening the
regularity condition in the infill asymptotic setting. See \citet[Section
4.4.1]{jacodprotter2012} for a comprehensive discussion on the localization
technique.} Specifically, condition (i) requires that the random function $%
z\mapsto g\left( z,\epsilon _{n,i}\right) $ is Lipschitz in $z$ over compact
sets under the $L_{2}$ distance. The $a_{n}$ sequence captures the scale of
the Lipschitz coefficient. In many applications, we can verify this
condition simply with $a_{n}\equiv 1$, but allowing $a_{n}$ to diverge to
infinity is sometimes necessary (see Example 2 below). Condition (ii) states
that the $\zeta _{t}$ process is locally compact (up to each stopping time $%
T_{m}$) and, upon removing the fixed-time discontinuity at $\tau $, it is $%
(1/2)$-H\"{o}lder continuous under the $L_{2}$ norm. This H\"{o}%
lder-continuity requirement can be easily verified using well-known results
provided that the $\tilde{\zeta}$ process is an It\^{o} semimartingle or a
long-memory process (see \citet[Chapter 2]{jacodprotter2012} and \cite%
{liliu2020}). Condition (iii) imposes the requisite assumptions on the
residual terms. In some applications, this condition holds trivially with $%
R_{n,i}\equiv 0$, but, more generally, it needs to be verified on a
case-by-case basis using (relatively standard) infill asymptotic techniques.
Theorem \ref{thmc}, below, establishes the size and power properties of the
permutation test under the hypotheses described in (\ref{hypo}).

\begin{theorem}
\label{thmc} In the state-space model (\ref{yss}), suppose that Assumptions %
\ref{as:ind} and \ref{as:smooth} hold, and that $a_{n}k_{n}^{3}\Delta
_{n}^{1/2}=o(1)$. Then, the following statements hold for the permutation
test $\hat{\phi}_{n}$ described in Algorithm 1:

(a) Under the null hypothesis in (\ref{hypo}), i.e., $\Delta \zeta _{\tau }=0
$, we have $\mathbb{E}[\hat{\phi}_{n}]\rightarrow \alpha $;

(b) Under a fixed alternative hypothesis in (\ref{hypo}), i.e., $\Delta
\zeta _{\tau }=c$ for some (unknown) constant $c\neq 0$ and if $k_{n}\rightarrow \infty $, we have $\mathbb{E}[%
\hat{\phi}_{n}]\rightarrow 1$.
\end{theorem}

This theorem is proved by verifying the high-level conditions in Theorem \ref%
{thm1} with properly constructed coupling variables analogous to those in
equation (\ref{usimple}). The condition $a_{n}k_{n}^{3}\Delta _{n}^{1/2}=o(1)
$ mainly requires that the window size $k_{n}$ does not grow too fast, which
ensures the closeness between the coupling variables and the original data.
In the typical case with $a_{n}=1$, it reduces to $k_{n}=o(\Delta_{n}^{-1/6})$.\footnote{This sufficient condition for the growth rate of $k_n$ is different from the conditions needed for conventional asymptotic-Gaussian-based spot inference, which requires {$k_n \asymp \Delta_n^{-\iota}$ for some $\iota\in(0,1/2)$}. For the permutation test, $k_n$ may be fixed or grow to infinity. However, in the latter case, our condition on the growth rate of $k_n$ is more stringent than what is needed for the conventional spot inference theory.} In general, a larger $k_n$ allows one to utilize more data, but the associated longer event window may also lead to a larger nonparametric bias, and hence, a more severe size distortion. Part (a) shows that the permutation test attains the desired
asymptotic level under the null hypothesis in (\ref{hypo}). Again, we stress
that the test has valid asymptotic size control even in the
\textquotedblleft small-sample\textquotedblright\ case with fixed $k_{n}$.
As in Theorem \ref{thm1}, the \textquotedblleft
large-sample\textquotedblright\ condition $k_{n}\rightarrow \infty $ is
needed for establishing the consistency of the test under the alternative,
as shown in part (b).

In the remainder of this subsection, we use a few prototype examples to
demonstrate how the proposed test may be used in various empirical settings.
In particular, we show how to cast the specific problems into the
approximate state-space model (\ref{yss}), and discuss how to verify our
sufficient regularity conditions. We start by revisiting the running example.

\bigskip

\noindent \textsc{Example 1 (Brownian Asset Returns, Continued)}. Recall
that $\epsilon _{n,i}\equiv \Delta _{n}^{-1/2}(W_{\left( i+1\right) \Delta
_{n}}-W_{i\Delta _{n}})$, $\zeta _{t}=\sigma _{t}$, and $g\left( z,\epsilon
\right) =z\epsilon $. In this context, the hypothesis testing problem in (%
\ref{hypo}) represents a test of the continuity of the volatility process $%
\sigma _{t}$ at time $t=\tau$, i.e., 
\begin{equation*}
H_{0}:\Delta \sigma_{\tau }=0\quad \text{versus}\quad H_{a}:\Delta \sigma
_{\tau }\neq 0.
\end{equation*}
We suppose that the volatility process $\sigma _{t}$ is non-degenerate by
setting its domain to $\mathcal{Z}=\left( 0,\infty \right) $. Since the
Brownian motion has independent increments with respect to the underlying
filtration, the disturbance term $\epsilon _{n,i}$ satisfies Assumption \ref%
{as:ind}(i). In addition, for each point $z\in \mathcal{Z}$, the random
variable $f\left( z,\epsilon _{n,i}\right) $ has an $\mathcal{N}\left(
0,z^{2}\right) $ distribution. It is then easy to see that conditions (ii)
and (iii) in Assumption \ref{as:ind} hold for any compact subset $\mathcal{K}%
\subseteq \mathcal{Z}$ (note that $\mathcal{K}$ is necessarily bounded away
from zero). To verify Assumption \ref{as:smooth}, first note that $g\left(
z,\epsilon _{n,i}\right) -g(z^{\prime },\epsilon _{n,i})=(z-z^{\prime
})\epsilon _{n,i}$, and hence, $\left\Vert g\left( z,\epsilon _{n,i}\right)
-g(z^{\prime },\epsilon _{n,i})\right\Vert _{2}=|z-z^{\prime }|$. Assumption %
\ref{as:smooth}(i) thus holds for $a_{n}=1$. It is well known that $\sigma
_{t}$ is locally $(1/2)$-H\"{o}lder continuous under the $L_{2}$ norm if it
is an It\^{o} semimartingale or a long-memory process; if so, Assumption \ref%
{as:smooth}(ii) is satisfied if the $\sigma _{t}$ and $\sigma _{t}^{-1}$
processes are both locally bounded. Finally, to verify Assumption \ref%
{as:smooth}(iii), we assume that the drift process $b_{t}$ is locally
bounded. It is then easy to show via routine calculations that $\max_{i\in 
\mathcal{I}_{n}}|R_{n,i}|=O_{p}(k_{n}^{1/2}\Delta _{n}^{1/2})$. Since the
condition $a_{n}k_{n}^{3}\Delta _{n}^{1/2}=o(1)$ in Theorem \ref{thmc}
implies that $O_{p}(k_{n}^{1/2}\Delta _{n}^{1/2})=o_{p}(k_{n}^{-2})$, we
have $\max_{i\in \mathcal{I}_{n}}|R_{n,i}|=o_{p}(k_{n}^{-2})$ as needed in
Assumption \ref{as:smooth}(iii). All conditions in Theorem \ref{thmc} are
now verified, and this shows that the permutation test $\hat{\phi}_{n}$ is
asymptotically valid for testing the null hypothesis $\Delta \sigma _{\tau
}=0$.\hfill $\square $

\bigskip

Example 1 shows that the permutation test $\hat{\phi}_{n}$ is asymptotically
valid for testing the presence of a volatility jump. This is a relatively
familiar problem in the literature. It is therefore useful to contrast the
proposed permutation test with the standard approach, which is based on
nonparametric \textquotedblleft spot\textquotedblright\ estimators of the
asset price's instantaneous variances before and after the event time given
by, respectively, 
\begin{equation}
\hat{\sigma}_{\tau -}^{2}=\frac{1}{k_{n}}\sum_{i\in \mathcal{I}%
_{1,n}}Y_{n,i}^{2},\quad \hat{\sigma}_{\tau }^{2}=\frac{1}{k_{n}}\sum_{i\in 
\mathcal{I}_{2,n}}Y_{n,i}^{2}.  \label{spotvar}
\end{equation}%
Assuming $k_{n}\rightarrow \infty $ and $k_{n}^{2}\Delta _{n}\rightarrow 0$,
it can be shown that (see \citet[Chapter 13]{jacodprotter2012})%
\begin{equation}
\frac{k_{n}^{1/2}\left( \hat{\sigma}_{\tau }^{2}-\hat{\sigma}_{\tau
-}^{2}-(\sigma _{\tau }^{2}-\sigma _{\tau -}^{2})\right) }{\sqrt{2\hat{\sigma%
}_{\tau }^{4}+2\hat{\sigma}_{\tau -}^{4}}}\overset{d}{\longrightarrow }%
\mathcal{N}\left( 0,1\right) .  \label{ex1:tstat}
\end{equation}%
Thus, we can test $H_0: \Delta \sigma _{\tau }=0$ by comparing the
t-statistic $k_{n}^{1/2}\left( \hat{\sigma}_{\tau }^{2}-\hat{\sigma}_{\tau
-}^{2}\right) /\sqrt{2\hat{\sigma}_{\tau }^{4}+2\hat{\sigma}_{\tau -}^{4}}$
with critical values based on the standard normal distribution.

Two remarks are in order. First, note that the asymptotic size control of
the standard approach relies on the asymptotic normal approximation (\ref%
{ex1:tstat}), which depends crucially on $k_{n}\rightarrow \infty $ (in
addition to having $\Delta_n\to 0$) because the underlying central limit
theorem is obtained by aggregating a \textquotedblleft
large\textquotedblright\ number of martingale differences. Hence, the t-test
may suffer from severe size distortion when $k_{n}$ is relatively small.
This issue is empirically relevant because an applied researcher may use a
short time window to capture short-lived ``impulse-like'' dynamics and/or to
minimize the impact of other confounding economic factors in the background.
Moreover, for \textquotedblleft real-time\textquotedblright\ applications,
the researcher may have no choice but to use a small $k_{n}$ simply because
of the limited amount of available data soon after the event time $\tau $.
In sharp contrast, the permutation test controls asymptotic size even when $%
k_{n}$ is fixed. This remarkable property is inherited from the coupling
two-sample problem, in which the permutation test controls size exactly
regardless of whether $k_{n}$ is fixed or grows to infinity.

The second and perhaps practically more important difference between the two
tests is that the permutation test is more versatile. Under the
spot-estimation-based approach, both the design of the spot estimators in (%
\ref{spotvar}) and the convergence in (\ref{ex1:tstat}) depend heavily on
the fact that the increments of the Brownian motion are not only i.i.d., but
also Gaussian. Gaussianity is obviously essential for the conventional
approach because, among other things, it ensures that the instantaneous
variance of the normalized returns are well-defined.\footnote{%
Recall that many distributions used in continuous-time models do not have
finite second moments. For example, within the class of stable
distributions, the Gaussian distribution is the only one with a finite
second moment. Moreover, Gaussianity also implies that the variance of $%
\Delta _{n}^{-1}(W_{i\Delta _{n}}-W_{(i-1)\Delta _{n}})^{2}$ is 2, which
explains the \textquotedblleft 2\textquotedblright\ factor in the
denominator of the t-statistic.} The permutation test, on the other hand,
only exploits the i.i.d.\ property of the Brownian shocks, without relying
on {their} Gaussianity. Therefore, the permutation test readily accommodates a
more general model for asset returns with L\'{e}vy shocks, as we demonstrate
in the following example.

\bigskip

\noindent \textsc{Example 2 (L\'{e}vy-driven Asset Returns)}. We generalize
the model in Example 1 by replacing the Brownian motion $W$ with a L\'{e}vy
martingale $L$, so that the asset return has the form%
\begin{equation*}
P_{\left( i+1\right) \Delta _{n}}-P_{i\Delta _{n}}=\int_{i\Delta
_{n}}^{\left( i+1\right) \Delta _{n}}b_{s}ds+\int_{i\Delta _{n}}^{\left(
i+1\right) \Delta _{n}}\sigma _{s}dL_{s}, \quad i\in \mathcal{%
I}_{n}.
\end{equation*}%
In this case, we define the random disturbance as $\epsilon _{n,i}\equiv
\Delta _{n}^{-1/\beta }(L_{\left( i+1\right) \Delta _{n}}-L_{i\Delta _{n}})$
for some constant $\beta \in (1,2]$. The more general normalizing sequence $%
\Delta _{n}^{-1/\beta }$ is used to ensure that $\epsilon _{n,i}$ has a
non-degenerate distribution. For instance, if $L$ is a stable process, we
take $\beta $ to be its jump-activity index, so that $\epsilon _{n,i}$ has a
centered stable distribution (recall that the Brownian motion is a stable
process with index $\beta =2$). We treat the value of $\beta $ as unknown.
Since the permutation test is scale-invariant with respect to the data, we
can nonetheless regard the normalized return $Y_{n,i}=\Delta _{n}^{-1/\beta
}(P_{\left( i+1\right) \Delta _{n}}-P_{i\Delta _{n}})$ as directly
observable (because tests implemented for $P_{\left( i+1\right) \Delta
_{n}}-P_{i\Delta _{n}}$ and $Y_{n,i}$ are identical). To apply our theory,
we represent $Y_{n,i}$ using the state-space model (\ref{yss}) with $\zeta
_{t}=\sigma _{t}$, $g\left( z,\epsilon \right) =z\epsilon $, and the
residual term given by 
\begin{equation*}
R_{n,i}=\Delta _{n}^{-1/\beta }\int_{i\Delta _{n}}^{\left( i+1\right) \Delta
_{n}}b_{s}ds+\Delta _{n}^{-1/\beta }\int_{i\Delta _{n}}^{\left( i+1\right)
\Delta _{n}}\left( \sigma _{s}-\sigma _{i\Delta _{n}}\right) dL_{s},\quad i\in \mathcal{%
I}_{n}.
\end{equation*}%
Recognizing that the scaled L\'{e}vy increments $(\epsilon _{n,i})_{i\in 
\mathcal{I}_{n}}$ are i.i.d., we can verify Assumptions \ref{as:ind} and \ref%
{as:smooth} using similar arguments as in Example 1 but with $a_{n}=\Delta
_{n}^{1/2-1/\beta }$, which depicts the rate at which $\Vert \epsilon
_{n,i}\Vert _{2}$ diverges. In particular, the condition $%
a_{n}k_{n}^{3}\Delta _{n}^{1/2}=o(1)$ requires $k_{n}$ to obey $%
k_{n}=o(\Delta _{n}^{(1/\beta -1)/3})$. Then, we can apply Theorem \ref{thmc}
to show that the permutation test $\hat{\phi}_{n}$ is asymptotically valid
for testing the discontinuity in the volatility process $\sigma _{t}$ at
time $\tau $, regardless of whether the driving L\'{e}vy process is a
Brownian motion or not. {To our knowledge, our test is the first in the literature that accommodates L\'evy-type shocks in the context of testing for volatility jumps.}\hfill $\square $

\bigskip

So far, we have illustrated the use of the permutation test for
high-frequency asset returns data. Under the settings of Examples 1 and 2,
the distributional change of asset returns is mainly driven by the time-$%
\tau $ discontinuity in volatility, and hence, the permutation test is
effectively a test for volatility jumps. Example 2, in particular,
highlights the versatility and robustness of the permutation test compared
with the conventional approach based on spot estimation. Going one step
further, we now illustrate how to apply the permutation test to other types
of economic variables.

\bigskip

\noindent \textsc{Example 3 (Location-Scale Model for Volume)}. Consider a
simple model for trading volume, under which the volume within the $i$th
sampling interval\ is given by $Y_{n,i}=\mu _{i\Delta _{n}}+v_{i\Delta
_{n}}\epsilon _{n,i}$. The $\mu_t$ location process captures the local mean,
or trading intensity, and the $v_t$ scale process captures the time-varying
heterogeneity in the order size. This location-scale model fits directly
into the state-space model (\ref{yss}) with $\zeta _{t}=(\mu _{t},v_{t})$, $%
g\left( (\mu ,v),\epsilon \right) =\mu +v\epsilon $, and $R_{n,i}\equiv 0$.
Let $\mathcal{F}_{t}$ be the filtration generated by the $\zeta _{t}$
process. If $\epsilon _{n,i}$ is independent of the $\zeta_t$ process and
has finite second moment and bounded PDF, then it is easy to verify
Assumptions \ref{as:ind} and \ref{as:smooth} with $a_{n}=1$. Theorem \ref%
{thmc} thus implies that the permutation test is valid for testing the
discontinuity in $\zeta _{t}=\left( \mu _{t},v_{t}\right) $ at time $\tau $%
.\hfill $\square $

\bigskip

The location-scale structure in Example 3 is by no means essential in
applications, because the permutation test is valid provided that the more
general conditions in Assumptions \ref{as:ind} and \ref{as:smooth} hold.
This illustration is pedagogically convenient, in that it permits a
straightforward verification of our high-level conditions. That being said,
this example does reveal a limitation of our theory developed so far. That
is, the data variable needs to be continuously distributed, as required in
Assumption \ref{as:ind}(ii) (which in turn is related to Assumption \ref%
{as:c}(ii)). Observed data in actual applications are invariably discrete,
but this continuous-distribution assumption is often deemed as a reasonable
approximation to reality. In some situations, however, the discreteness in
the data is more salient. For example, the trading volume of a relatively
illiquid asset may take values as small integer multiples of the lot size
(e.g., 100 shares).\footnote{%
This issue has become less important in the equity market as retail
investors can now trade a single share, or even a fractional share, of a
stock. However, the lot size is still relevant for less liquid assets such
as option contracts or for equity data from earlier sample periods.} This
motivates us to directly confront the discreteness in the data, as detailed
in the next subsection.

\subsection{Extension: the case with discretely valued data\label{sec:t3}}

The extension will be carried out in similar steps as the theory developed
above. We start with modifying the general result in Theorem \ref{thm1} to
accommodate discretely valued observations; we then specialize the general theory to a high-frequency setting under more primitive conditions. Recall that $Q_{j,n}\left( \cdot
\right) $ denotes the $\mathcal{G}_{n}$-conditional distribution function of
the coupling variable $U_{n,i}$ for $i\in \mathcal{I}_{j,n}$ and $j\in
\left\{ 1,2\right\} $, and $\overline{Q}_{n}=(Q_{1,n}+Q_{2,n})/2$.

\begin{theorem}
\label{thm2}Suppose that there exists a collection of variables $\left(
U_{n,i}\right) _{i\in \mathcal{I}_{n}}$ that satisfies Assumption \ref{as:c}%
(i) for some sequence $(\mathcal{G}_{n})_{n\geq 1}$ of $\sigma $-fields, and 
$\mathbb{P}(\widetilde{Y}_{n,i}\neq U_{n,i})=o\left( k_{n}^{-1}\right) $
uniformly in $i\in \mathcal{I}_{n}$ where $(\widetilde{Y}_{n,i})_{i\in 
\mathcal{I}_{n}}$ is an identical copy of $(Y_{n,i})_{i\in \mathcal{I}_{n}}$
in $\mathcal{G}_{n}$-conditional distribution. Then, the following
statements hold for the test $\hat{\phi}_{n}$ described in Algorithm 1:

(a) If the variables $(U_{n,i})_{i\in \mathcal{I}_{n}}$ have the same $\mathcal{G}_{n}$%
-conditional distribution, we have $\mathbb{E}[\hat{\phi}_{n}]\rightarrow \alpha 
$.

(b) If $k_{n}\rightarrow \infty $ and $\mathbb{P(}\int \left(
Q_{1,n}(x)-Q_{2,n}(x)\right) ^{2}d\overline{Q}_{n}\left( x\right) >\delta
_{n})\rightarrow 1$ for any real sequence $\delta _{n}=o(1)$, we have $%
\mathbb{E}[\hat{\phi}_{n}]\rightarrow 1$.
\end{theorem}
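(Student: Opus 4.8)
The plan is to reduce Theorem~\ref{thm2} to the \emph{infeasible} permutation test built from the coupling variables $(U_{n,i})_{i\in\mathcal{I}_n}$, following the same two-step logic as in the proof of Theorem~\ref{thm1}: first analyze the infeasible test as a (conditional) classical two-sample problem, then transfer its rejection behavior to the feasible test $\hat{\phi}_n$. The only thing that changes is the transfer step. For discretely valued data the ``small-perturbation'' coupling of Theorem~\ref{thm1} (which rests on Assumption~\ref{as:c}(ii)--(iii)) is not available; instead, the hypothesis $\mathbb{P}(\widetilde{Y}_{n,i}\neq U_{n,i})=o(k_n^{-1})$ is exactly what makes the feasible and infeasible tests \emph{coincide} on an event of probability tending to one.

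First I would pass from $(Y_{n,i})_{i\in\mathcal{I}_n}$ to its conditional copy $(\widetilde{Y}_{n,i})_{i\in\mathcal{I}_n}$. Working on a probability space carrying $\widetilde{Y}$, $U$, the $\sigma$-field $\mathcal{G}_n$, and the auxiliary randomization of Step~3 of Algorithm~1 (and the random subset $\widehat{G}_n$ of Remark~\ref{rem:RandomPerm}, if used), with the randomization independent of everything else, one has $(\widetilde{Y}_{n,i})_{i\in\mathcal{I}_n}\overset{d}{=}(Y_{n,i})_{i\in\mathcal{I}_n}$, so the test computed from $\widetilde{Y}$ has the same law—hence the same expectation—as $\hat{\phi}_n$; I keep the notation $\hat{\phi}_n$ for the former. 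Next, set $A_n\equiv\bigcap_{i\in\mathcal{I}_n}\{\widetilde{Y}_{n,i}=U_{n,i}\}$. Since $|\mathcal{I}_n|=2k_n$, a union bound gives $\mathbb{P}(A_n^c)\le 2k_n\sup_{i\in\mathcal{I}_n}\mathbb{P}(\widetilde{Y}_{n,i}\neq U_{n,i})=o(1)$. On $A_n$ the two data vectors coincide, so the statistic $\widehat{T}_n$, all permuted statistics, the critical value $\widehat{T}_n^{\ast}$, the counts $\widehat{M}_n^{+},\widehat{M}_n^{0}$, and $\hat{p}_n$ all equal their counterparts computed from $(U_{n,i})_{i\in\mathcal{I}_n}$; feeding in the same randomization yields $\hat{\phi}_n=\hat{\phi}_n^{U}$ on $A_n$, where $\hat{\phi}_n^{U}$ is Algorithm~1 applied to $(U_{n,i})_{i\in\mathcal{I}_n}$. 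Hence $|\mathbb{E}[\hat{\phi}_n]-\mathbb{E}[\hat{\phi}_n^{U}]|\le\mathbb{P}(A_n^c)=o(1)$, and the problem is reduced to computing $\lim_n\mathbb{E}[\hat{\phi}_n^{U}]$.

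For part~(a), conditionally on $\mathcal{G}_n$ the variables $(U_{n,i})_{i\in\mathcal{I}_n}$ are i.i.d., hence exchangeable, so the permutation test has exact finite-sample level by \citet[Theorem~15.2.1]{lehmannromano2005}: $\mathbb{E}[\hat{\phi}_n^{U}\mid\mathcal{G}_n]=\alpha$, and therefore $\mathbb{E}[\hat{\phi}_n^{U}]=\alpha$ and $\mathbb{E}[\hat{\phi}_n]\to\alpha$. For part~(b), the hypotheses $k_n\to\infty$ and $\mathbb{P}(\int(Q_{1,n}(x)-Q_{2,n}(x))^2\,d\overline{Q}_n(x)>\delta_n)\to1$ for every $\delta_n=o(1)$ are identical to those in Theorem~\ref{thm1}(b), and the portion of the proof of Theorem~\ref{thm1}(b) that establishes $\mathbb{E}[\hat{\phi}_n^{U}]\to1$—applying \citet[Theorem~15.2.3]{lehmannromano2005} to the infeasible test and using the \cite{chungromano2013} coupling to verify the \cite{hoeffding:1952} condition—relies only on Assumption~\ref{as:c}(i) and these two hypotheses, and in particular does not invoke any continuity of the $U_{n,i}$. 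Quoting it gives $\mathbb{E}[\hat{\phi}_n^{U}]\to1$, whence $\mathbb{E}[\hat{\phi}_n]\to1$.

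The substance is concentrated in two places: making the change from $Y$ to $\widetilde{Y}$ rigorous by pinning down the joint probability space and the independence of the randomization, and checking that the infeasible-test consistency argument from the proof of Theorem~\ref{thm1}(b) is genuinely insensitive to discreteness so that it transfers verbatim to $\hat{\phi}_n^{U}$. Everything else collapses to a single union bound, which is precisely why the condition $\mathbb{P}(\widetilde{Y}_{n,i}\neq U_{n,i})=o(k_n^{-1})$ here takes the place of Assumption~\ref{as:c}(ii)--(iii).
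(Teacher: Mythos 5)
Your proposal is correct and follows essentially the same route as the paper: reduce to the infeasible test built from $(U_{n,i})_{i\in\mathcal{I}_n}$, bound the probability of disagreement by a union bound using $\mathbb{P}(\widetilde{Y}_{n,i}\neq U_{n,i})=o(k_n^{-1})$, and then quote the classical randomization results (the paper's own proof of Theorem \ref{thm2} does exactly this, bounding $\mathbb{P}(E_n^c)$ by $\sum_i\mathbb{P}(\widetilde{Y}_{n,i}\neq U_{n,i})$ and deferring the rest to the proof of Theorem \ref{thm1}). Your choice to work on the event where the data vectors coincide exactly, rather than the event where only the ranks agree, is a harmless (arguably cleaner, given possible ties in discrete data) variant of the same argument.
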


Theorem \ref{thm2} establishes exactly the same asymptotic properties for
the permutation test as Theorem \ref{thm1}, but under different conditions:
it does not impose the anti-concentration requirement for the coupling
variable (i.e., Assumption \ref{as:c}(ii)), and the \textquotedblleft
distance\textquotedblright\ between the observed data and the coupling
variable is measured by the probability mass of $\{\widetilde{Y}_{n,i}\neq
U_{n,i}\}$. These modifications seem natural for the discrete-data setting.

Next, we specialize the general result in Theorem \ref{thm2} to the
state-space model (\ref{yss}), starting with some motivating examples. The
first is an alternative model for the trading volume that explicitly
features discretely valued data, which shows an interesting contrast to
Example 3.

\bigskip

\noindent \textsc{Example 4 (Poisson Model for Volume)}. Let $Y_{n,i}$ be
the trading volume of an asset within the $i$th sampling interval. Following 
\cite{andersen1996}, we model the discretely valued volume using a Poisson
distribution with time-varying mean. To form a state-space representation,
let $(\epsilon _{n,i}(t))_{t\geq 0}$ be a copy of the standard Poisson
process on $\mathbb{R}_{+}$, independent across $i$, and let $\zeta _{t}$ be
the time-varying mean process independent of the $\epsilon _{n,i}$'s. We
then set $Y_{n,i}=\epsilon _{n,i}\left( \zeta _{i\Delta _{n}}\right) $,
which, conditional on the $\zeta $ process, is Poisson distributed with mean 
$\zeta _{i\Delta _{n}}$. This representation is a special case of (\ref{yss}%
), with $g(\zeta,\epsilon) = \epsilon(\zeta)$ being a time-change and $%
R_{n,i} = 0$. We also note that although the $\epsilon _{n,i}$'s are assumed
to be i.i.d., the $(Y_{n,i})_{i\in\mathcal{I}_n}$ series can be highly
persistent through its dependence on the stochastic mean process $\zeta _{t}$%
.\hfill $\square $

\bigskip

To further broaden the empirical scope, we consider another example
concerning the bid-ask spread of asset quotes. This example is
econometrically interesting because of its resemblance to the
discrete-choice models (e.g., probit and logit) commonly used for modeling
binary and multinomial data.

\bigskip

\noindent \textsc{Example 5 (Bid-Ask Spread)}. Let $Y_{n,i}$ be the bid-ask
spread of an asset at time $i\Delta _{n}$. For a liquid asset, the spread is
often maintained at 1 tick (e.g., 1 cent), but it may widen to several ticks
due to a higher level of asymmetric information or dealer's inventory cost.
For ease of exposition, we suppose that $Y_{n,i}$ is a binary variable
taking values in $\{1,2\}$, while noting that a multinomial extension is
straightforward. Motivated by the classical discrete-choice models, we model
the spread as $Y_{n,i}=1+1\left\{ \zeta _{i\Delta _{n}}\geq \epsilon
_{n,i}\right\} $, and suppose that the variables $(\epsilon _{n,i})_{i\in 
\mathcal{I}_{n}}$ are i.i.d.\ and independent of the $\zeta _{t}$ process.
With the CDF of $\epsilon _{n,i}$ denoted by $F_{\epsilon }(\cdot )$, we
have $\mathbb{P}\left( Y_{n,i}=2|\zeta _{i\Delta _{n}}\right) =F_{\epsilon
}(\zeta _{i\Delta _{n}})$. Evidently, upon redefining $\zeta _{t}$ as $%
F_{\epsilon }\left( \zeta _{t}\right) $, we can assume that $\epsilon _{n,i}$
is uniformly distributed on the $[0,1]$ interval without loss of generality.
This normalization in turn allows us to interpret $\zeta _{t}$ as the
stochastic propensity of a \textquotedblleft wide\textquotedblright\ spread,
which may serve as a measure of market illiquidity.\hfill $\square $

\bigskip

We now proceed to establish the asymptotic validity of the permutation test
for the hypotheses described in (\ref{hypo}) for discretely valued
observations; see Theorem \ref{thmd} below. Since the state-space
representation (\ref{yss}) holds with the residual term $R_{n,i}=0$ in the
examples above, it seems reasonable to avoid unnecessary redundancy by
restricting our analysis to a simpler version given by%
\begin{equation}
Y_{n,i}=g\left( \zeta _{i\Delta _{n}},\epsilon _{n,i}\right) ,\quad i\in 
\mathcal{I}_{n}.  \label{yss2}
\end{equation}
We replace Assumption \ref{as:smooth} with the following assumption, where
we recall that for each $z\in \mathcal{Z}$, $F_{z}\left( \cdot \right) $
denotes the CDF of the random variable $g\left( z,\varepsilon _{n,i}\right) $
and $\tilde{\zeta}_{t}=\zeta _{t}-\Delta \zeta _{\tau }1\{t\geq \tau \}$.

\begin{assumption}
\label{as:sd}There exist a sequence $\left( T_{m}\right) _{m\geq 1}$ of
stopping times increasing to infinity, a sequence of compact subsets $(%
\mathcal{K}_{m})_{m\geq 1}$ of $\mathcal{Z}$, and a sequence $\left(
K_{m}\right) _{m\geq 1}$ of constants such that for each $m\geq 1$: (i) $%
\mathbb{P}\left( g\left( z,\epsilon _{n,i}\right) \neq g(z^{\prime
},\epsilon _{n,i})\right) \leq K_{m}\left\Vert z-z^{\prime }\right\Vert $
for all $z,z^{\prime }\in \mathcal{K}_{m}$; (ii) $\zeta _{t}$ takes values
in $\mathcal{K}_{m}$ for all $t\leq T_{m}$, and $\Vert \tilde{\zeta}%
_{t\wedge T_{m}}-\tilde{\zeta}_{s\wedge T_{m}}\Vert _{2}\leq K_{m}\left\vert
t-s\right\vert ^{1/2}$ for all $t,s$ in some fixed neighborhood of $\tau $.
\end{assumption}

\begin{theorem}
\label{thmd} In the state-space model (\ref{yss2}), suppose that Assumptions %
\ref{as:ind}(i), \ref{as:ind}(iii) and \ref{as:sd} hold, and that $%
k_{n}^{3}\Delta _{n}=o(1)$. Then, the following statements hold for the
permutation test $\hat{\phi}_{n}$ described in Algorithm 1:

(a) Under the null hypothesis in (\ref{hypo}), i.e., $\Delta \zeta _{\tau }=0
$, we have $\mathbb{E}[\hat{\phi}_{n}]\rightarrow \alpha $;

(b) Under a fixed alternative hypothesis in (\ref{hypo}), i.e., $\Delta
\zeta _{\tau }=c$ for some (unknown) constant $c\neq 0$, {and if $k_{n}\rightarrow \infty $, we have that $\mathbb{E}[\hat{\phi}_{n}]\rightarrow 1$.}
\end{theorem}

Theorem \ref{thmd} depicts the same asymptotic behavior of the permutation
test as in Theorem \ref{thmc}. The sufficient conditions of these results
differ mainly in how to gauge the closeness between the data and the
coupling variable, as manifest in the difference between Assumption \ref%
{as:smooth}(i) and Assumption \ref{as:sd}(i). The latter is easy to verify
under more primitive conditions in concrete settings. Specifically, in
Example 4, we note that $|g(z,\epsilon _{n,i})-g(z^{\prime },\epsilon
_{n,i})|$ is a Poisson random variable with mean $\left\vert z-z^{\prime
}\right\vert $, and hence, $\mathbb{P}(g(z,\epsilon _{n,i})\neq g(z^{\prime
},\epsilon _{n,i}))=1-\exp (-\left\vert z-z^{\prime }\right\vert )\leq
\left\vert z-z^{\prime }\right\vert $ as desired. In Example 5, we can use $%
\epsilon _{n,i}\sim \text{Uniform}\left[ 0,1\right] $ to deduce that 
\begin{equation*}
\mathbb{P}\left( g\left( z,\epsilon _{n,i}\right) \neq g\left( z^{\prime
},\epsilon _{n,i}\right) \right) =\mathbb{P}\left( 1\{z\geq \epsilon
_{n,i}\}\neq 1\{z^{\prime }\geq \epsilon _{n,i}\}\right) =\left\vert
z-z^{\prime }\right\vert ,
\end{equation*}%
which, again, verifies Assumption \ref{as:sd}(i). Therefore, in the context
of Examples 4 and 5 above, the permutation test is asymptotically valid for
detecting discontinuities in trading activity and illiquidity, respectively.


\section{Monte Carlo simulations\label{sec:mc}}


\subsection{Setting\label{sec:mc1}}

Our Monte Carlo experiment is based on the setting of Example 2. We simulate the (log) price process according
to $dP_{t}=\sigma _{t}dL_{t}$ under an Euler scheme on a 1-second mesh, and
then resample the data at the $\Delta _{n}=1$ minute frequency. We simulate $%
L$ either as a standard Brownian motion or as a (centered symmetric) stable
process with index $\beta =1.5$. To avoid unrealistic price path, we
truncate the stable distribution so that its normalized increment $\Delta
_{n}^{-1/\beta }\left( L_{i\Delta _{n}}-L_{(i-1)\Delta _{n}}\right) $ is
supported on $\left[ -C,C\right] $, and we consider $C\in \{10,20,30\}$ to
examine the effect of the support. The unit of time is one day. 

To simulate the volatility process, we first simulate two volatility factors
according to the following dynamics (see \cite{BT}):%
\begin{eqnarray*}
dV_{1,t} &=&0.0116(0.5-V_{1,t})dt+0.1023\sqrt{V_{1,t}}\left( \rho dL_{t}+%
\sqrt{1-\rho ^{2}}dB_{1,t}\right) +c\cdot 1_{\left\{ t=\tau \right\} }, \\
dV_{2,t} &=&0.6930(0.5-V_{2,t})dt+0.7909\sqrt{V_{2,t}}\left( \rho dL_{t}+%
\sqrt{1-\rho ^{2}}dB_{2,t}\right) +c\cdot 1_{\left\{ t=\tau \right\} },
\end{eqnarray*}%
where $B_{1,t}$ and $B_{2,t}$ are independent standard Brownian motions that
are also independent of $L_{t}$, $\rho =-0.7$ captures the negative correlation
between price and volatility shocks (namely the \textquotedblleft
leverage\textquotedblright\ effect). { The $V_{1}$ volatility factor is highly
	persistent with a half-life of 2.5 months, while the $V_{2}$ volatility
	factor is quickly mean-reverting with a half-life of only one day.} The constant $c$ determines the
size of the volatility jump at the event time $\tau $. In particular, $c=0$
corresponds to the null hypothesis, and we consider a range of $c$ values in $%
(0,5]$ in order to trace out a power curve for the corresponding alternative
hypotheses. The range of the $c$ parameter is calibrated according to %
\citeauthor{fomc}'s\ (\citeyear{fomc}) empirical estimates for FOMC
announcements.\footnote{%
Specifically, \cite{fomc} estimate the average jump size of $\log (\sigma
_{t})$ for the S\&P 500 ETF around FOMC announcements to be 1.037 (see Table 3
of that paper). This suggests that $%
\sigma _{\tau }^{2}/\sigma _{\tau -}^{2}=$ $(\exp (1.037))^{2}\approx 8$ on average, corresponding
to $c\approx  3.5$ in this Monte Carlo design.}
%

We note that the two volatility factors, $V_{1}$ and $V_{2}$, capture the
slow- and fast-mean-reverting volatility dynamics, respectively, with the
former having \textquotedblleft smoother\textquotedblright\ sample paths than
the latter. With this in mind, we simulate $\sigma _{t}$ using two
models:%
\begin{equation}
\left\{ 
\begin{array}{ll}
\text{Model A:} & \sigma _{t}^{2}=2V_{1,t}, \\ 
\text{Model B:} & \sigma _{t}^{2}=V_{1,t}+V_{2,t}.%
\end{array}%
\right.  \label{dgp}
\end{equation}%
In finite samples, Model A features relatively smooth volatility paths,
which is close to the \textquotedblleft ideal\textquotedblright\ scenario
underlying the infill asymptotic theory. Meanwhile, Model B generates more
realistic, and rougher, sample path for $\sigma $, providing a nontrivial
challenge for the proposed inference theory.

We implement the permutation test at the 5\% significance level, with the window
size $k_{n}\in \{15,30,60,90\}$. The six-fold increase from the smallest window size to the largest one represents a considerable range that allows us to explore the robustness of the proposed test with respect to the $k_n$ tuning parameter.\footnote{We also implemented simulations in which the two subsamples, $\mathcal{I}_{1,n}$ and $\mathcal{I}_{2,n}$, have different sample sizes $k_{1,n},k_{2,n}\in \{15,30,60,90\}$. As anticipated in footnote \ref{foot:kn}, these results are quantitatively similar to those with a common sample size, i.e., $k_{1,n}=k_{2,n}$. These additional results are omitted for brevity but are available upon request.}  The critical value is computed as in Remark \ref{rem:RandomPerm} based on 1,000 i.i.d.\ permutations. For comparison, we also implement the standard (two-sided)
t-test based on (\ref{ex1:tstat}). Rejection frequencies are computed based on
{2,000} Monte Carlo trials.

\subsection{Results\label{sec:mc2}}

\begin{table}[t]
\caption{Rejection Rates under the Null Hypothesis}
\label{tab:size}\centering
\begin{tabularx}{.9\textwidth}{rrYYYYcYYYY}\toprule
		&		& \multicolumn{4}{c}{Permutation Test} &    &   \multicolumn{4}{c}{T-test} \\
		&      &  (1) &   (2)   &   (3)  &  (4)   &      &  (1) &   (2)   &   (3)  &  (4) \\
		\midrule
		\multicolumn{11}{c}{\textit{Model A: One-factor Volatility}} \\	
		$k_n = 15$   &      &  0.050 & 0.058 & 0.053 & 0.062   & & 0.011 & 0.017 & 0.021 & 0.032 \\
		$k_n = 30$   &      &   0.054 & 0.049 & 0.056 & 0.057  & & 0.031 & 0.042 & 0.044 & 0.063 \\
		$k_n = 60$   &      &    0.047 & 0.048 & 0.053 & 0.059 & & 0.046 & 0.053 & 0.055 & 0.077 \\
		$k_n = 90$   &      &   0.058 & 0.050 & 0.053 & 0.056  & & 0.048 & 0.056 & 0.074 & 0.091 \\
		\\
		\multicolumn{11}{c}{\textit{Model B: Two-factor Volatility}} \\
		$k_n = 15$   &      &   0.049 & 0.055 & 0.054 & 0.062  & & 0.014 & 0.017 & 0.023 & 0.037 \\
		$k_n = 30$   &      &   0.055 & 0.050 & 0.055 & 0.056  & & 0.037 & 0.041 & 0.051 & 0.073 \\
		$k_n = 60$   &      &  0.049 & 0.049 & 0.054 & 0.064   & & 0.091 & 0.077 & 0.078 & 0.102 \\
		$k_n = 90$    &      & 0.064 & 0.052 & 0.058 & 0.059   & & 0.136 & 0.101 & 0.117 & 0.147 \\
		\bottomrule
		\multicolumn{11}{p{.88\textwidth}}{\textit{Note: }This table presents rejection frequencies of the permutation test and the t-test under the null hypothesis $\sigma^2_{\tau-} = \sigma^2_{\tau}$. The significance level is fixed at 5\%. Column (1) corresponds to the case with $L$ being a standard Brownian motion, and columns (2)--(4) correspond to cases in which $L$ is truncated stable with index $1.5$ and truncation parameter $C \in \{10,20,30\}$. The rejection frequencies are computed based on 2,000 Monte Carlo trials.}
	\end{tabularx}
\end{table}

We first examine the size properties of the permutation test $\hat{\phi}_n$ and the t-test based on (\ref{ex1:tstat}).
Table \ref{tab:size} reports the rejection frequencies of these tests under
the null hypothesis (i.e., $c=0$) for various data generating processes.
Column (1) corresponds to the case with $L$ being a standard Brownian
motion, and columns (2), (3), and (4) report results when $L$ is a truncated
stable process with the truncation parameter $C=10$, $20$, and $30$,
respectively.

The top panel of the table shows results from Model A, where the volatility
is solely driven by the ``slow'' factor. Quite remarkably, the rejection
frequencies of the permutation test are very close to the 5\% nominal level
for all specifications of $L$ and, importantly, for a wide range of the window size $k_n$. In contrast, the rejection rates of the t-test appear to be
far more sensitive to the choice of $k_n$. 
As we increase $k_n$ from 15 to 90, the rejection rate increases from 1.1\% to 4.8\% when $L$ is a Brownian motion. A similar pattern emerges when $L$ is a truncated stable process, except that the rejection rates now exceed the nominal level and reach 7.4\% and 9.1\% in columns (3) and (4), respectively. It is relevant to note that the t-test is not formally justified when $L$ is not a Brownian motion.

The more challenging case is Model B with the two-factor volatility dynamics.
Looking at the bottom panel of Table \ref{tab:size}, we find that the
permutation test still has rejection rates that are quite close to the
nominal level, although we see a slight over-rejection of 6.4\% when $k_n = 90$. This is
likely due to the fact that the approximation error in the coupling has 
nontrivial impact when the window size is large. That being said, we note that the benchmark t-test is more severely affected by this bias issue, with rejection rates reaching 9.1\% and 13.6\% when $k_n = 60$ and $k_n = 90$, respectively.

Figure \ref{fig:power} plots the power curves of the permutation test and the t-test for various $k_{n}$'s in Model A and Model B. We note that the four specifications of $L$ produce qualitatively similar results. We see that the rejection frequencies increase with the window size $k_{n}$ and the jump size $c$, which is expected from our consistency result obtained under $k_n\to \infty$. The permutation test appears to be less powerful than the t-test under the alternative hypothesis. This is a natural consequence of the typical trade-off between efficiency and robustness to the assumptions. The t-test is based on the spot variance estimator, which is ``locally'' the maximum-likelihood estimator of the spot variance under the Brownian shocks. The asymptotic validity and efficiency of the t-test rely on the Brownian assumption. In contrast, the permutation test is asymptotically valid regardless of whether the shocks are Brownian or not. As expected, this robustness is costly in terms of power. On the flip side, the power advantage of the t-test comes at the cost of size distortion when shocks are non-Brownian, which can be large, as shown in Table \ref{tab:size}.



\begin{figure}[t]
\includegraphics[width=\textwidth]{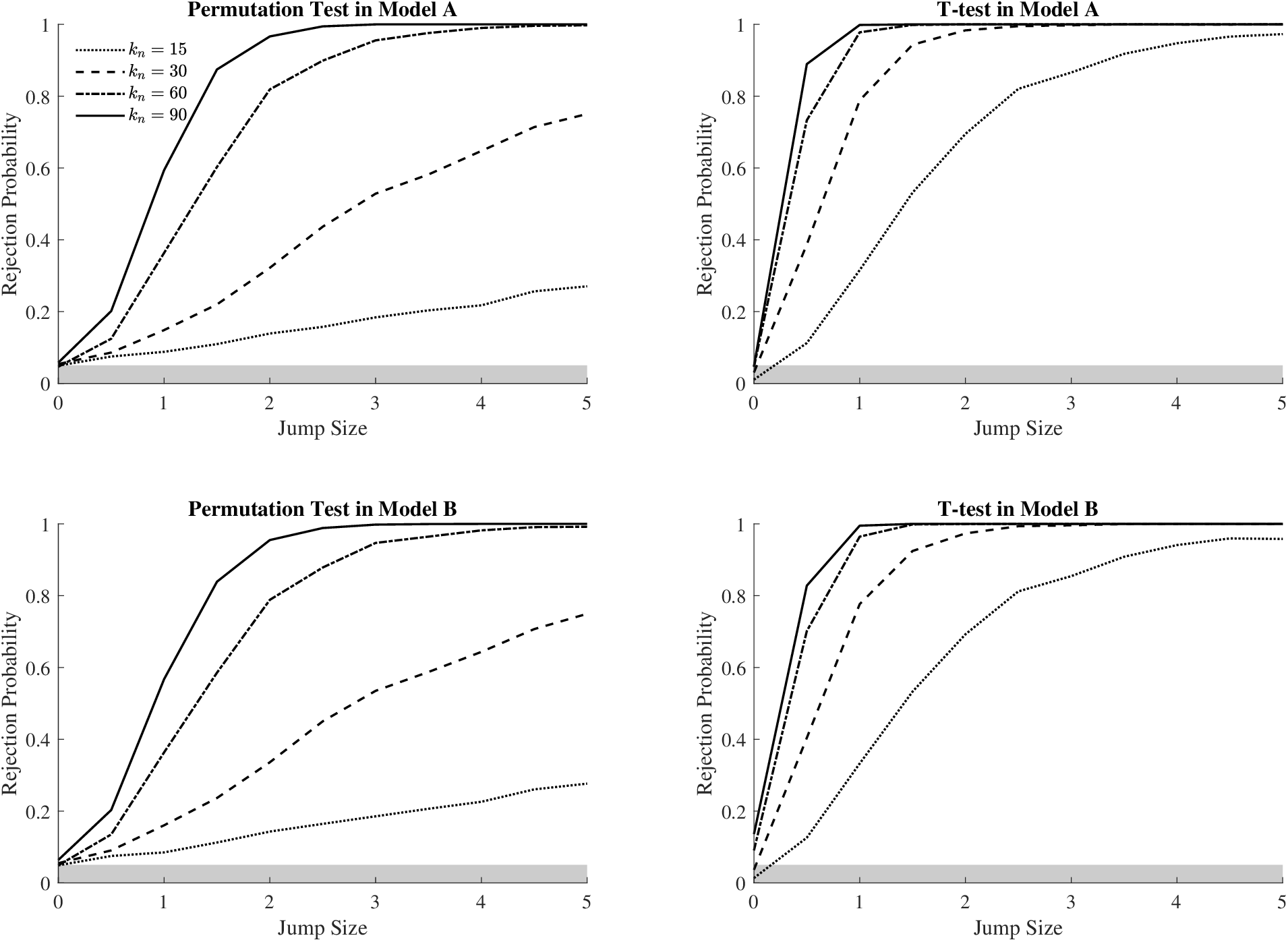}
\caption{The figure plots the rejection frequencies of the permutation test
and the t-test. The significance level is fixed at 5\% (highlighted by
shade). Results for Model A and Model B are presented in the top and bottom
rows, respectively. In all the plots, $L$ is simulated as a standard
Brownian motion.
The power curves are computed for the jump size parameter $%
c\in\{0,0.5,1,\ldots,5\}$. The rejection frequencies are computed based
on {2,000} Monte Carlo trials.}
\label{fig:power}
\end{figure}

Overall, we find that the permutation test controls size remarkably well
under the null hypothesis. Although it appears to be less powerful than the
t-test, it does not suffer from the latter's size distortion which can be
severe in the two-factor volatility model. Our results suggest that, given its robustness, the
permutation test is a useful complement to the conventional
test based on spot estimation and asymptotic Gaussian approximation.

\section{An empirical illustration\label{sec:emp}} 

We apply the proposed permutation test to a recent sample of high-frequency price and volume observations of the S\&P 500 ETF (NYSE: SPY) as an empirical illustration. The sampling frequency is one minute; the data source is the TAQ database. With the permutation test, we are interested in testing distributional discontinuities of the ETF's return, trading volume, and two measures of illiquidity for several FOMC announcements during the COVID-19 pandemic. This setting highlights one of the key merits of the proposed test, namely, it is applicable for a broad variety of high-frequency observations modeled in distinct ways. This is in sharp contrast to the conventional t-test based on \eqref{ex1:tstat}, which is designed specifically for testing volatility jumps and whose validity relies on the assumption of Brownian shocks.
We construct the high-frequency volume series as the total number of shares within each one-minute trading session. The illiquidity measures of interest include Amihud's measure defined as the ratio between absolute return and {dollar} trading volume (\cite{amihud2002illiquidity}) and the bid-ask spread averaged within each one-minute trading session. 

We focus on four important FOMC announcements in the 2020--2021 sample period that are related to distinct aspects of the Federal Reserve's monetary policy during the COVID-19 pandemic. The first is the announcement made on March 3, 2020, which was also the first FOMC announcement after COVID-19 hit the United States. The Fed stated its decision to lower the federal funds rate by 1/2 percentage point as its first response to counter the pandemic's negative impact on the economy. The second event occurred on December 16, 2020. At that time, being concerned with the rising long-term yield, many market participants expected that the Fed might implement the so-called ``operation twist'' to tame the steepening of the yield curve. But this turned out not to be on the Fed's agenda, and so, the Fed's ``inaction'' may be deemed as a shock relative to the market's anticipation. The third case pertains to the announcement on March 17, 2021. During the press conference, the Fed Chairman suggested that the central bank would be unlikely to raise {the} rate in the next 2--3 years, which may be regarded as a forward guidance on the target rate. The final example is the announcement on September 22, 2021, when the Fed officially declared its intention to taper the large-scale asset purchase program.

Figure \ref{fig:sum} plots the asset return and trading volume of SPY over one-hour windows centered at these announcement times. For ease of comparison, we plot the return and volume data for the four events on the same scale. These plots immediately reveal the highly distinct market conditions at those times. This  highlights the usefulness of adopting a high-frequency event-study research design, which allows us to investigate each event separately, rather than pooling information across different announcements under a likely fragile homogeneity assumption. We also observe several interesting patterns regarding how the market {responds} to the ``lumpy'' information embedded in the announcements. We generally see a rise in trading activity after the announcement. The price also tends to fluctuate more in the post-announcement window, although the March 3, 2020 event may be an exception as the market was already quite volatile even before the announcement. 


\begin{figure}[t]
	\centering
	\includegraphics[width=0.9\linewidth]{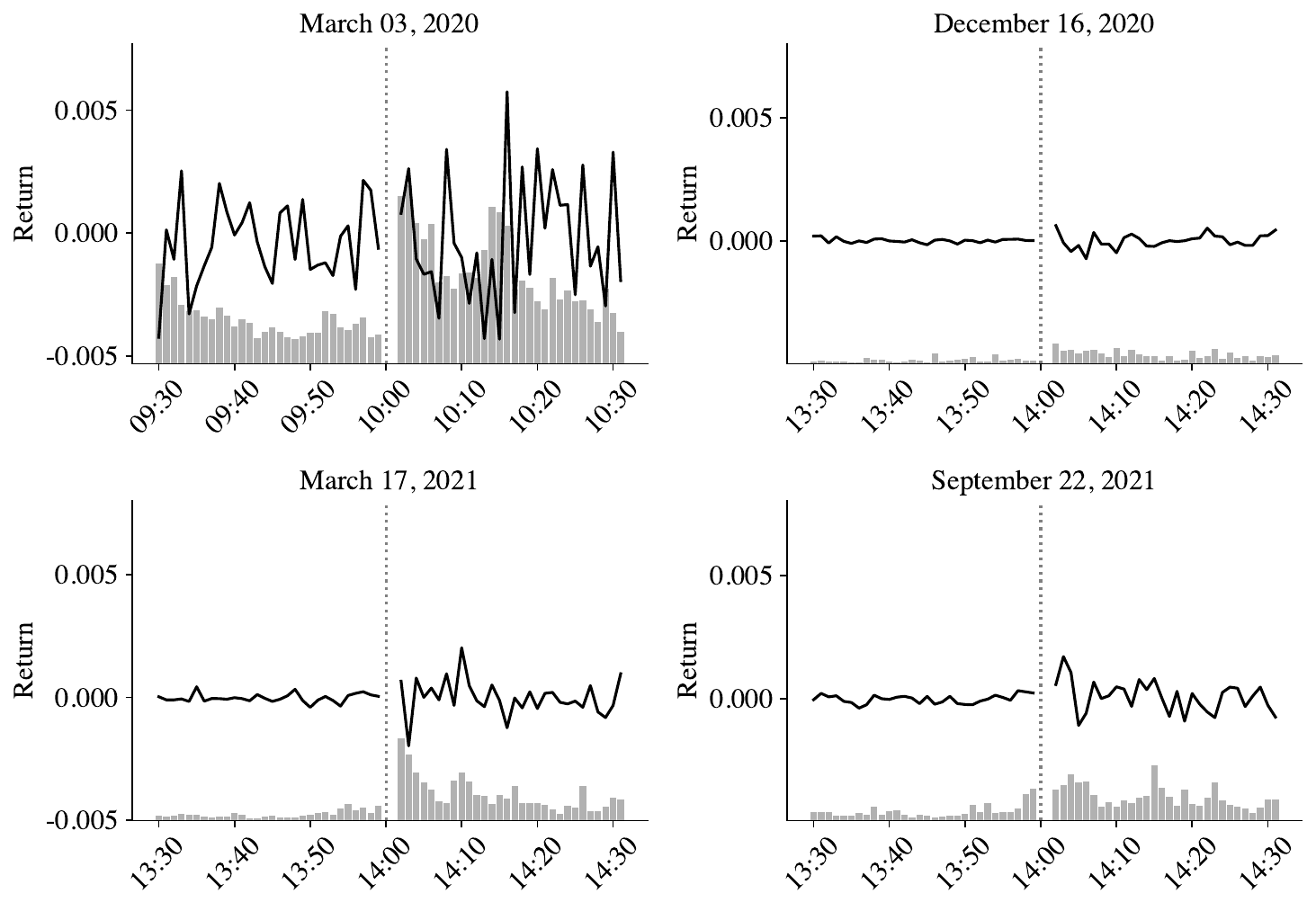}
	\caption{This figure plots the price {return} (solid) and volume (shaded) of the SPY ETF in one-hour windows around four FOMC announcements.}
	\label{fig:sum}
\end{figure}

As mentioned above, we implement the permutation test {described in Algorithm 1} on the ETF's returns, trading volume, Amihud's measure, and bid-ask spread constructed on the one-minute sampling frequency. For ease of interpretation, we consider the non-randomized version of the test described in Remark \ref{rem:RandomPerm}. We consider two event windows: $k_n = 10$ {minutes} or $30$ {minutes}. Recall that each FOMC announcement contains two stages. The first is an immediate release of a short summary on the Federal Reserve's webpage, which will be further detailed in the Fed {Chairman}'s opening statement during the first (roughly) 10 minutes of the press conference. The second part is a Q\&A session in which the {Chairman} {responds} to questions from the media, with the first few generally being more important. Given this setup, the shorter $k_n=10$ window allows us to focus on the immediate impact of the FOMC statement, whereas the longer $k_n=30$ window further covers the ``more subtle'' policy information conveyed to the public during the Q\&A session. It is worth noting that the 30-minute event window is also adopted in prior work on the high-frequency identification of monetary policy shocks; see \cite{nakamura2018QJE}.

\begin{table}[t]
	\centering
	\caption{Test Results for Selected FOMC Announcements}
	\label{tab:EmpTestResults}
	\begin{tabular}{ccccccccc}
		\toprule
		& \multicolumn{4}{c}{$k_{n}=10$} & \multicolumn{4}{c}{$k_{n}=30$}\\
		\cmidrule(r){2-5}\cmidrule(l){6-9}
  Date ($\tau$) & Return & Volume & Amihud & Spread & Return & Volume & Amihud & Spread\\
\midrule
03/03/2020 & & 
\textasteriskcentered{}\textasteriskcentered{}\textasteriskcentered{} &
\textasteriskcentered{} &
\textasteriskcentered{}\textasteriskcentered{}\textasteriskcentered{} &
& 
\textasteriskcentered{}\textasteriskcentered{}\textasteriskcentered{} &
&
\textasteriskcentered{}\textasteriskcentered{}\textasteriskcentered{}\\
12/16/2020 & 
\textasteriskcentered{}\textasteriskcentered{} &
\textasteriskcentered{}\textasteriskcentered{}\textasteriskcentered{} &
\textasteriskcentered{}\textasteriskcentered{} &
\textasteriskcentered{}\textasteriskcentered{}\textasteriskcentered{} &
\textasteriskcentered{}\textasteriskcentered{} &
\textasteriskcentered{}\textasteriskcentered{}\textasteriskcentered{} &
&
\textasteriskcentered{}\textasteriskcentered{}\textasteriskcentered{}\\
03/17/2021 & 
\textasteriskcentered{} &
\textasteriskcentered{}\textasteriskcentered{}\textasteriskcentered{} &
&
\textasteriskcentered{}\textasteriskcentered{}\textasteriskcentered{} & 
\textasteriskcentered{}\textasteriskcentered{} &
\textasteriskcentered{}\textasteriskcentered{}\textasteriskcentered{} &
&
\textasteriskcentered{}\textasteriskcentered{}\textasteriskcentered{}\\
09/22/2021 & 
\textasteriskcentered{}\textasteriskcentered{} &
\textasteriskcentered{}\textasteriskcentered{}\textasteriskcentered{} &
&
\textasteriskcentered{}\textasteriskcentered{}\textasteriskcentered{} &
\textasteriskcentered{}\textasteriskcentered{}\textasteriskcentered{} &
\textasteriskcentered{}\textasteriskcentered{}\textasteriskcentered{} &
&
\textasteriskcentered{}\textasteriskcentered{}\textasteriskcentered{}\\
\bottomrule
		\multicolumn{9}{l}{
			\begin{minipage}{16cm} 
				\vspace{.5em}
				\small Note: The permutation test is implemented following Algorithm 1. For ease of interpretation, we consider the non-randomized version as described in Remark \ref{rem:RandomPerm} using 100,000 i.i.d.\ permutations.
				Rejections at the 10\%, 5\%, and 1\% significance levels are indicated by \textasteriskcentered{}, \textasteriskcentered{}\textasteriskcentered{}, 
				and \textasteriskcentered{}\textasteriskcentered{}\textasteriskcentered{},  respectively.
			\end{minipage}
		}
	\end{tabular}
\end{table}

Table \ref{tab:EmpTestResults}  reports the results of the permutation test.  Recall that the permutation test applied to asset returns may be interpreted as a test for volatility jumps, as explained in Example 1 (with Brownian shocks) or Example 2 (with more general L\'evy shocks).\footnote{Prior work on volatility jumps (see, e.g., \cite{fomc} and \cite{litodorovzhang2021}) focuses exclusively on the case in which volatility is the scaling factor of Brownian shocks. Our permutation test is related to the prior work, but is valid under a more general notion of volatility with respect to non-Brownian shocks.} At a significance level of 10\%, the permutation test rejects the null of continuity for all events except the one on March 3, 2020. The latter non-rejection is consistent with our previous observation that the volatility of SPY was high even before the announcement. Meanwhile, the test also strongly rejects the null hypothesis of distributional continuity for the volume series, echoing the burst of trading activity seen in Figure \ref{fig:sum}.

The permutation test applied to the two illiquidity measures generates mixed results. We find some moderate evidence for the distributional discontinuity of Amihud's measure shortly after the March 3 and December 16 announcements in 2020. For the case with {a} 30-minute window, we do not reject the null of continuity for any of the announcements. The overall evidence {suggests} that the FOMC announcements under study did not lead to {an} abrupt change in the market impact coefficient gauged by Amihud's measure. Needless to say, this finding per se does not imply that the liquidity condition is unchanged after the announcement, as the notion of liquidity is a multifaceted concept. Indeed, we see that the permutation test applied to the bid-ask spread always strongly rejects the null of continuity. The post-announcement spread tends to be larger than its pre-announcement level, suggesting that it is significantly more costly to trade during {the} post-announcement trading session. 

All in all, the empirical illustration above demonstrates how the proposed permutation test may be used to test for distributional discontinuities for a variety of market variables. This type of versatility is not easily attained by existing methods in the high-frequency econometrics literature. We also see that interesting empirical findings may be obtained even with a small number of observations, which confirms the {practical} relevance of allowing the $k_n$ window to be possibly fixed in our asymptotic theory for the permutation test.
%


\section{Concluding remarks\label{conclusion}}

In this paper, we propose using a permutation test to detect discontinuities in an economic model at a cutoff point. Relative to the existing literature, we show that the permutation test is well suited for event studies based on time-series data. While nonparametric t-tests have been widely used for this purpose in various empirical contexts, the permutation test proposed in this paper provides a distinct alternative. Instead of relying on asymptotic (mixed) Gaussianity from central limit theorems, we exploit finite-sample properties of the permutation test in the approximating, or ``coupling'', two-sample problem.

We demonstrate that our new theory is broadly useful in a wide range of problems in the infill asymptotic time-series setting, which justifies using the permutation test to detect jumps in economic variables such as volatility, trading activity, and liquidity.
Compared with the conventional nonparametric t-test, the proposed permutation test has several distinct features. First, the permutation test provides asymptotic size control regardless of whether the sizes of the local subsamples are fixed or growing to infinity. In the latter case, we also establish that the permutation test is consistent. Second, the permutation test is versatile, as it can be applied without modification to many different contexts and under relatively weak conditions.

\begin{center}
	\textsc{Appendix: Proofs}
\end{center}


Throughout the proofs, we use $K$ to denote a positive constant that may
change from line to line, and write $K_{p}$ to emphasize its dependence on
some parameter $p$.  For any event $E\in \mathcal{F%
}$, we identify it with the associated indicator random variable.

\begin{proof}[Proof of Theorem \ref{thm1}]%
{Step 1.} Define $\phi _{n}$ in the same way as $\hat{\phi}_{n}$ but with $%
(Y_{n,i})_{i\in \mathcal{I}_{n}}$ replaced by $(U_{n,i})_{i\in \mathcal{I}%
_{n}}$. In this step, we show that 
\begin{equation}
\mathbb{E}[\hat{\phi}_{n}]=\mathbb{E}[\phi _{n}]+o(1).  \label{thm1:100}
\end{equation}

Let $\tilde{\phi}_{n}$ be defined in the same way as $\hat{\phi}_{n}$, but
with $( Y_{n,i}) _{i\in \mathcal{I}_{n}}$ replaced by $( \widetilde{Y}%
_{n,i})_{i\in \mathcal{I}_{n}}$, as defined in Assumption \ref{as:c}(iii).
Since $(\widetilde{Y}_{n,i})_{i\in \mathcal{I}_{n}}$ and $(Y_{n,i})_{i\in 
\mathcal{I}_{n}}$ {have} the same (conditional) distribution, 
\begin{equation}
\mathbb{E}[\tilde{\phi}_{n}]=\mathbb{E}[\hat{\phi}_{n}].  \label{thm1:101}
\end{equation}
Let $E_{n}\in\mathcal{F}$ be the event where the ordered values of $( U_{n,i}) _{i\in 
\mathcal{I}_{n}}$ and $(\widetilde{Y}_{n,i})_{i\in \mathcal{I}_{n}}$
correspond to the same permutation of $\mathcal{I}_{n}$. Since the test
statistic is only a function of the rank of the observations, we have $%
\tilde{\phi}_{n}=\phi _{n}$ in restriction to $E_n$. Hence, 
\begin{equation}
\vert \mathbb{E}[\tilde{\phi}_{n}]-\mathbb{E}[ \phi _{n}] \vert =\vert 
\mathbb{E}[\tilde{\phi}_{n} E_{n}^{c}]-\mathbb{E} [ \phi _{n} E_{n}^{c}]
\vert \leq \mathbb{P}( E_{n}^{c}) .  \label{thm1:102}
\end{equation}
By (\ref{thm1:101}) and (\ref{thm1:102}), (\ref{thm1:100}) follows from $%
\mathbb{P}( E_{n}^{c}) =o(1)$, which will be proved below.

Let $A_{n,i,j}\equiv \{U_{n,j}-U_{n,i}\geq 0,\widetilde{Y}_{n,j}-\widetilde{Y%
}_{n,i}<0\}$ for every $(i,j)\in \mathcal{I}_{n}\times \mathcal{I}_{n}$, and
note that $E_{n}^{c}\subseteq \cup _{i,j}A_{n,i,j}$. Recall the elementary
fact that if a sequence of random variables $X_{n}=o_{p}(1)$, then there
exists a real sequence $\delta _{n}=o(1)$ such that $\mathbb{P}(|X_{n}|\leq
\delta _{n})\to 1$. 
Under Assumption \ref{as:c}(iii), by applying this result to $%
X_{n}=2\max_{i\in \mathcal{I}_{n}}|\widetilde{Y}_{n,i}-U_{n,i}|k_{n}^{2}$,
we can find a sequence $\delta_{n}=o(1)$ such that 
\begin{equation}
\mathbb{P}\left( \max_{i\in \mathcal{I}_{n}}|\widetilde{Y}%
_{n,i}-U_{n,i}|\leq \delta_{n}k_{n}^{-2}/2\right) \rightarrow 1.
\label{thmrank:102}
\end{equation}%
We then observe that 
\begin{align*}
A_{n,i,j}& ~\subseteq ~\{U_{n,j}-U_{n,i}\geq \delta_{n}k_{n}^{-2},\widetilde{%
Y}_{n,j}-\widetilde{Y}_{n,i}<0\}\cup \{0\leq
U_{n,j}-U_{n,i}<\delta_{n}k_{n}^{-2}\} \\
& ~\subseteq ~\{|\widetilde{Y}_{n,j}-\widetilde{Y}_{n,i}-(U_{n,j}-U_{n,i})|>%
\delta_{n}k_{n}^{-2}\}\cup \{0\leq U_{n,j}-U_{n,i}<\delta_{n}k_{n}^{-2}\} \\
& ~\subseteq ~\{\max_{i\in \mathcal{I}_{n}}|\widetilde{Y}_{n,i}-U_{n,i}|>%
\delta_{n}k_{n}^{-2}/2\}\cup \{0\leq U_{n,j}-U_{n,i}<\delta_{n}k_{n}^{-2}\}.
\end{align*}%
Therefore, 
\begin{equation*}
E_{n}^{c}~\subseteq ~\cup _{i,j}A_{n,i,j}~\subseteq ~\{\max_{i\in \mathcal{I}%
_{n}}|\widetilde{Y}_{n,i}-U_{n,i}|>\delta_{n}k_{n}^{-2}/2\}\cup (\cup
_{i,j\in \mathcal{I}_{n}}\{0\leq U_{n,j}-U_{n,i}<\delta_{n}k_{n}^{-2}\}),
\end{equation*}%
which, together with (\ref{thmrank:102}), implies that 
\begin{equation}
\mathbb{P}(E_{n}^{c})\leq \mathbb{P}(\cup _{i,j\in \mathcal{I}_{n}}\{0\leq
U_{n,j}-U_{n,i}<\delta_{n}k_{n}^{-2}\})+o(1).  \label{thmrank:201}
\end{equation}%
Next, consider the following argument: 
\begin{align}
\mathbb{P}(\cup _{i,j\in \mathcal{I}_{n}}\{0\leq
U_{n,j}-U_{n,i}<\delta_{n}k_{n}^{-2}\}|\mathcal{G}_{n})& \leq \sum_{i,j\in 
\mathcal{I}_{n}}\mathbb{P}(0\leq U_{n,j}-U_{n,i}<\delta_{n}k_{n}^{-2}|%
\mathcal{G}_{n})  \notag \\
& \leq 2k_{n}\sum_{i\in \mathcal{I}_{n}}\sup_{x\in \mathbb{R}}\mathbb{P}%
(|U_{n,i}-x|\leq \delta_{n}k_{n}^{-2}|\mathcal{G}_{n})  \notag \\
& =O_{p}(\delta_n)=o_{p}(1),  \label{thmrank:302}
\end{align}%
where the last line holds by Assumption \ref{as:c}(ii). By (\ref{thmrank:302})
and the bounded convergence theorem, 
\begin{equation}
\mathbb{P}(\cup _{i,j\in \mathcal{I}_{n}}\{0\leq
U_{n,j}-U_{n,i}<\delta_{n}k_{n}^{-2}\})=o(1).  \label{thmrank:302b}
\end{equation}%
By combining (\ref{thmrank:201}) and (\ref{thmrank:302b}), we conclude that $%
\mathbb{P}(E_{n}^{c})=o(1)$, as desired.

{Step 2.} We now prove the assertions in parts (a) and (b) of the
theorem. In view of (\ref{thm1:100}), we only need to prove $\mathbb{E}[ \phi
_{n} ] \to \alpha $ and $\mathbb{E}[ \phi _{n}] \to 1$ in these two parts,
respectively. For part (a), note that $( U_{n,i}) _{i\in \mathcal{I}_{n}}$
are conditionally i.i.d.\ and so permutations constitute a group of
transformations that satisfy the randomization hypothesis in %
\citet[Definition 15.2.1]{lehmannromano2005}. Then, \citet[Theorem 15.2.1]{lehmannromano2005} implies that $\mathbb{E}[ \phi _{n}|\mathcal{\ G}_{n}]
=\alpha $, and $\mathbb{E}[ \phi _{n}] =\alpha $ then follows from the law
of iterated expectations.

To prove part (b), we need some additional notation. To emphasize the
dependence of $\widehat{T}_{n}$, $\widehat{T}_{n}^{\ast }$, and $\hat{\phi}
_{n}$ on the original data $(Y_{n,i})_{i\in \mathcal{I}_{n}}$, we explicitly
write them as $\widehat{T}_{n}( Y) $, $\widehat{T}_{n}^{\ast }( Y) $, and $%
\hat{\phi}_{n}( Y) $. With this notation, we can rewrite $\phi _{n}=\hat{\phi%
}_{n}( U) $, since it is computed in the same way as $\hat{\phi}_{n}$ but
with $(Y_{n,i})_{i\in \mathcal{I}_{n}}$ replaced by $(U_{n,i})_{i\in 
\mathcal{I}_{n}}$.

We first analyze the asymptotic behavior of $\widehat{T}_{n}( U) $. Define
the empirical analogue of $Q_{j,n}( \cdot ) $ as 
\begin{equation*}
\widehat{Q}_{j,n}( x) \equiv \frac{1}{k_{n}}\sum_{i\in \mathcal{I}
_{j,n}}1\{U_{n,i}\leq x\}.
\end{equation*}
Since the variables $(U_{n,i}) _{i\in \mathcal{I}_{j,n}}$ are $\mathcal{G}%
_{n}$-conditionally i.i.d., 
\begin{equation*}
\mathbb{E}[ ( \widehat{Q}_{j,n}( x) -Q_{j,n}(x)) ^{2}\vert \mathcal{G}_{n}]
\leq O(k_{n}^{-1}) = o(1).
\end{equation*}
By Markov's inequality and the law of iterated expectations, this implies
that $\widehat{Q}_{j,n}( x) -Q_{j,n}(x)=o_{p}(1)$ for each $x\in \mathbb{R}$%
. This and a classical Glivenko--Cantelli theorem (e.g., 
\citet[Theorem
21.5]{davidson:1994}) imply that 
\begin{equation}
\sup_{x\in \mathbb{R}}\vert \widehat{Q}_{j,n}( x) -Q_{j,n}(x)\vert =o_{p}(1),%
\text{ for }j\in \{ 1,2\} .  \label{thm1:801}
\end{equation}
By definition, 
\begin{equation*}
\widehat{T}_{n}( U) \equiv \frac{1}{2k_{n}}\sum_{i\in \mathcal{I} _{n}}( 
\widehat{Q}_{1,n}( U_{n,i}) -\widehat{Q}_{2,n}( U_{n,i}) ) ^{2}.
\end{equation*}
In addition, we define 
\begin{equation*}
S_{n}\equiv \frac{1}{2k_{n}}\sum_{i\in \mathcal{I}_{n}}( Q_{1,n}( U_{n,i})
-Q_{2,n}( U_{n,i}) ) ^{2}.
\end{equation*}
Note that the functions $\widehat{Q}_{j,n}( \cdot ) $ and $Q_{j,n}( \cdot ) $
are uniformly bounded. Hence, by the triangle inequality and (\ref{thm1:801}%
), 
\begin{align}
\vert \widehat{T}_{n}( U) -S_{n}\vert &\leq \frac{1}{ 2k_{n}}\sum_{i\in 
\mathcal{I}_{n}}\vert ( \widehat{Q}_{1,n}( U_{n,i}) -\widehat{Q}_{2,n}(
U_{n,i}) ) ^{2}-( Q_{1,n}( U_{n,i}) -Q_{2,n}( U_{n,i}) ) ^{2}\vert  \notag \\
&\leq \frac{K}{k_{n}}\sum_{i\in \mathcal{I}_{n}}\sum_{j\in \{1,2\}}\vert 
\widehat{Q}_{j,n}( U_{n,i}) -Q_{j,n}( U_{n,i}) \vert =o_{p}(1).
\label{thm1:802}
\end{align}
Conditional on $\mathcal{G}_{n}$, the bounded random functions $Q_{1,n}(
\cdot ) $ and $Q_{2,n}( \cdot ) $ can be treated as deterministic functions.
Next, note that 
\begin{equation}
S_{n}=\frac{1}{2}\sum_{j\in \{ 1,2\} }\int ( Q_{1,n}( x) -Q_{2,n}( x) )
^{2}dQ_{j,n}( x) +o_{p}(1)=\int ( Q_{1,n}( x) -Q_{2,n}( x) ) ^{2}d\overline{Q%
}_{n}( x) +o_{p}(1),  \label{thm1:803}
\end{equation}
where the first equality holds by a law of large numbers for the
conditionally i.i.d.\ variables $( U_{n,i}) _{i\in \mathcal{I}_{j,n}}$ for $j=1,2$, and
the second equality holds by the definition of $\overline{Q}_{n}$. By
combining (\ref{thm1:802}) and (\ref{thm1:803}), we deduce that 
\begin{equation}
\widehat{T}_{n}( U) =\int ( Q_{1,n}( x) -Q_{2,n}( x) ) ^{2}d\overline{Q}%
_{n}( x) +o_{p}(1).  \label{thm1:804_pre}
\end{equation}
In turn, by \eqref{thm1:804_pre} and the condition in part (b), we conclude that
\begin{equation}
\mathbb{P}( \widehat{T}_{n}( U) >\delta _{n}) \to 1 \text{ for any }\delta _{n}=o( 1) . \label{thm1:804}
\end{equation}

Next, we analyze the asymptotic behavior of $\widehat{T}_{n}^{\ast }( U) $.
It is useful to consider the following representation of this variable. We
denote $U_{\tilde{\pi} }=( U_{n,\tilde{\pi} ( i) }) _{i\in \mathcal{I}_{n}}$%
, where $\tilde{\pi} $ is a random permutation of $\mathcal{I}_{n}$,
independent from the data, and is drawn uniformly from the set of all
permutations of $\mathcal{I}_{n}$. By definition, $\widehat{T} _{n}^{\ast }(
U) $ is the $1-\alpha $ quantile of $\widehat{T} _{n}( U_{\tilde{\pi}}) $,
conditional on the sample, where the randomness comes from the random
realization of $\tilde{\pi}$. To analyze the permutation distribution, we
construct an additional coupling sequence of $( U_{n,i}) _{i\in \mathcal{I}%
_{n}}$ following the method of \citet[Section
5.3]{chungromano2013}.  We note that their coupling construction does not require the null hypothesis to hold, and it is thus suitable for our current purposes. The result of their coupling construction is another
random sequence $(U_{n,i}^{\prime })_{i\in \mathcal{I }_{n}}$ such that (i) $%
U_{n,i}=U_{n,i}^{\prime }$ for all $i$ in some random subset $\mathcal{I}%
_{n}^{\prime }\subseteq \mathcal{I}_{n}$; (ii) the cardinality of $\mathcal{I%
}_{n}\setminus \mathcal{I}_{n}^{\prime }$, denoted $D_n$, satisfies $\mathbb{%
E}[ D_{n}] =O(k_{n}^{1/2})$; and (iii) $(U_{n,i}^{\prime
})_{i\in \mathcal{I}_{n}}$ are $\mathcal{G}_{n}$-conditionally i.i.d.\ with
marginal distribution $\overline{Q}_{n}$.

{For any fixed arbitrary permutation $\pi $ and for} $j\in \{ 1,2\} $, define 
\begin{equation*}
\widehat{Q}_{j,n}( x;\pi ) \equiv \frac{1}{k_{n}}\sum_{i\in \mathcal{I}%
_{j,n}}1\{U_{n,\pi ( i) }\leq x\}\quad \text{and}\quad \widehat{Q}%
_{j,n}^{\prime }( x;\pi ) \equiv \frac{1}{k_{n}} \sum_{i\in \mathcal{I}%
_{j,n}}1\{U_{n,\pi ( i) }^{\prime }\leq x\}.
\end{equation*}
By repeatedly using the triangle inequality, 
\begin{align}
&\vert \widehat{T}_{n}( U_{\pi }) -\widehat{T}_{n}( U_{\pi }^{\prime }) \vert
\notag \\
&=\frac{1}{2k_{n}}\left\vert \sum_{i\in \mathcal{I}_{n}} \bigg( \big( 
\widehat{Q}_{1,n}( U_{n,\pi ( i) };\pi ) -\widehat{Q}_{2,n}( U_{n,\pi ( i)
};\pi ) \big) ^{2}- \big( \widehat{Q}_{1,n}^{\prime }( U_{n,\pi ( i)
}^{\prime };\pi ) -\widehat{Q}_{2,n}^{\prime }( U_{n,\pi ( i) }^{\prime
};\pi ) \big) ^{2}\bigg)\right\vert  \notag \\
&\leq \frac{K}{k_{n}}\sum_{j\in \{ 1,2\} }\sum_{i\in \mathcal{I} _{n}}%
\big\vert \widehat{Q}_{j,n}( U_{n,\pi ( i) };\pi ) -\widehat{Q}%
_{j,n}^{\prime }( U_{n,\pi ( i) }^{\prime };\pi ) \big\vert  \notag \\
&\leq \frac{K}{k_{n}^{2}}\sum_{i,k\in \mathcal{I}_{n}}\big\vert 1\{U_{n,\pi
( k) }\leq U_{n,\pi ( i) }\}-1\{U_{n,\pi ( k) }^{\prime }\leq U_{n,\pi ( i)
}^{\prime }\}\big\vert  \notag \\
&\leq KD_{n}/k_{n}=o_{p}(1),  \label{thm1:830}
\end{align}
where the last inequality uses the fact that $( U_{n,i},U_{n,k})
=(U_{n,i}^{\prime },U_{n,k}^{\prime })$ if $( i,k) \in \mathcal{I}
_{n}^{\prime }\times \mathcal{I}_{n}^{\prime }$, and so the summation on the
previous line only has $( 2k_{n}) ^{2}-( 2k_{n}-D_{n}) ^{2}\leq 4k_{n}D_{n}$
bounded terms that can be different from zero; and the $o_p(1)$ statement
follows from $\mathbb{E}[ D_{n}] =O(k_{n}^{1/2})$, $k_{n}\to \infty $, and
Markov's inequality.

For any fixed arbitrary permutation $\pi $, $\widehat{T}_{n}( U_{\pi
}^{\prime }) $ is the Cram\'{e}r-von Mises statistic for the $\mathcal{G}%
_{n} $-conditionally i.i.d.\ variables $(U_{n,\pi ( i) }^{\prime })_{i\in 
\mathcal{I}_{n}}$. Hence, by a similar argument leading to (\ref{thm1:804}),
we have $\widehat{T}_{n}( U_{\pi }^{\prime }) =o_{p}(1)$. By combining this
with (\ref{thm1:830}), it follows that 
\begin{equation}
\widehat{T}_{n}( U_{\pi }) =o_{p}(1).  \label{thm1:830b}
\end{equation}
Since this result holds for any arbitrary fixed permutation $\pi $, it also
holds for any pair of permutations considered at random from the set of all
possible permutations of $\mathcal{I}_{n}$, independently from the data. By
elementary properties of stochastic convergence, this implies the so-called
Hoeffding's condition (e.g., \citet[Equation (15.10)]{lehmannromano2005}).
By this and \citet[Theorem 15.2.3]{lehmannromano2005}, the permutation
distribution associated with the test statistic $\widehat{T}_{n}( U) $,
conditional on the data, converges to zero in probability. As a corollary of
this, 
\begin{equation}
\widehat{T}_{n}^{\ast }( U) =o_{p}(1).  \label{thm1:832}
\end{equation}

From (\ref{thm1:804}) and (\ref{thm1:832}), it is easy to see that $\widehat{T}_{n}( U) > \widehat{T}_{n}^{\ast }( U)$ with probability approaching 1. This further implies that $\mathbb{E}[\phi_n] \to
1$, which, together with (\ref{thm1:100}) proves the assertion of part (b). 
\end{proof}%

\bigskip

\begin{proof}[Proof of Theorem \ref{thmc}]
(a) We prove the assertion of part (a) by applying Theorem \ref{thm1}(a). We
construct the coupling variable $U_{n,i}$ as follows: 
\begin{equation}
U_{n,i}=g(\zeta _{(i^{\ast }-k_{n})\Delta _{n}},\epsilon _{n,i}),\quad \text{%
for all }i\in \mathcal{I}_{n}\text{.}  \label{thmc:101}
\end{equation}%
We set $\mathcal{G}_{n}=\mathcal{F}_{(i^{\ast }-k_{n})\Delta _{n}}$. By
Assumption \ref{as:ind}, $(\epsilon _{n,i})_{i\in \mathcal{I}_{n}}$ are
i.i.d.\ and independent of $\mathcal{G}_{n}$. Since $\zeta _{(i^{\ast
}-k_{n})\Delta _{n}}$ is $\mathcal{G}_{n}$-measurable, the variables $\left(
U_{n,i}\right) _{i\in \mathcal{I}_{n}}$ are $\mathcal{G}_{n}$-conditionally
i.i.d. This verifies the condition in part (a) of Theorem \ref{thm1}, which
also implies Assumption \ref{as:c}(i). It remains to verify conditions (ii)
and (iii) in Assumption \ref{as:c}.

By a standard localization argument (see \citet[Section 4.4.1]{jacodprotter2012}), we can strengthen Assumption \ref{as:smooth} by
assuming $T_{1}=\infty $, $\mathcal{K}_{m}=\mathcal{K}$, and $K_{m}=K$ for
some fixed compact set $\mathcal{K}$ and constant $K>0$. In particular, $%
\zeta _{(i^{\ast }-k_{n})\Delta _{n}}$ takes values in the compact set $%
\mathcal{K}$. By Assumption \ref{as:ind}, it is then easy to see that the $%
\mathcal{G}_{n}$-conditional probability density of $U_{n,i}=g(\zeta
_{(i^{\ast }-k_{n})\Delta _{n}},\epsilon _{n,i})$ is uniformly bounded (and
it does not depend on $i$). This implies condition (ii) of Assumption \ref%
{as:c}.

Finally, we verify condition (iii) of Assumption \ref{as:c}. By Assumption %
\ref{as:ind}(i), for each $i\in \mathcal{I}_{n}$, $\varepsilon _{n,i}$ is
independent of $\mathcal{F}_{i\Delta _{n}}$. Since $\zeta _{i\Delta _{n}}$
and $\zeta _{\left( i^{\ast }-k_{n}\right) \Delta _{n}}$ are $\mathcal{F}%
_{i\Delta _{n}}$-measurable, we deduce from Assumption \ref{as:smooth}(i)
that%
\begin{equation}
\mathbb{E}[|g(\zeta _{i\Delta _{n}},\epsilon _{n,i})-g(\zeta _{(i^{\ast
}-k_{n})\Delta _{n}},\epsilon _{n,i})|^{2}|\mathcal{F}_{i\Delta _{n}}]\leq
Ka_{n}^{2}\Vert \zeta _{i\Delta _{n}}-\zeta _{(i^{\ast }-k_{n})\Delta
_{n}}\Vert ^{2}.  \label{thmc:110}
\end{equation}%
Note that under the null hypothesis with $\Delta \zeta _{\tau }=0$, the
processes $\zeta _{t}$ and $\tilde{\zeta}_{t}$ are identical. Hence, by
Assumption \ref{as:smooth}(ii) and (\ref{thmc:110}),%
\begin{equation*}
\left\Vert g(\zeta _{i\Delta _{n}},\epsilon _{n,i})-g(\zeta _{(i^{\ast
}-k_{n})\Delta _{n}},\epsilon _{n,i})\right\Vert _{2}\leq
Ka_{n}k_{n}^{1/2}\Delta _{n}^{1/2}.
\end{equation*}%
By the maximal inequality under the $L_{2}$ norm (see, e.g., 
\citet[Lemma
2.2.2]{vandervaartwellner}), we further deduce that
\begin{equation}
\left\Vert \max_{i\in \mathcal{I}_{n}}\left\vert g(\zeta _{i\Delta
_{n}},\epsilon _{n,i})-g(\zeta _{(i^{\ast }-k_{n})\Delta _{n}},\epsilon
_{n,i})\right\vert \right\Vert _{2}\leq Ka_{n}k_{n}\Delta _{n}^{1/2}.
\label{thmc:111}
\end{equation}%
Recall that $a_{n}k_{n}^{3}\Delta _{n}^{1/2}=o(1)$ by assumption. Hence, 
\begin{equation}
\max_{i\in \mathcal{I}_{n}}\left\vert g(\zeta _{i\Delta _{n}},\epsilon
_{n,i})-g(\zeta _{(i^{\ast }-k_{n})\Delta _{n}},\epsilon _{n,i})\right\vert
=o_{p}(k_{n}^{-2}).  \label{thmc:112}
\end{equation}%
Note that, by the definitions in (\ref{yss}) and (\ref{thmc:101}),%
\begin{equation}
Y_{n,i}-U_{n,i}=g(\zeta _{i\Delta _{n}},\epsilon _{n,i})-g(\zeta _{(i^{\ast
}-k_{n})\Delta _{n}},\epsilon _{n,i})+R_{n,i}.  \label{thmc:113}
\end{equation}%
Combining (\ref{thmc:112}), (\ref{thmc:113}), and Assumption \ref{as:smooth}%
(iii), we deduce that $\max_{i\in \mathcal{I}_{n}}\left\vert
Y_{n,i}-U_{n,i}\right\vert =o_{p}\left( k_{n}^{-2}\right) $, which verifies
Assumption \ref{as:c}(iii). We have now verified all the conditions needed
in Theorem \ref{thm1}(a), which proves the assertion of part (a) of Theorem %
\ref{thmc}.

(b) We prove the assertion of part (b) by applying Theorem \ref{thm1}(b).
Under the maintained alternative hypothesis, we have $\Delta \zeta _{\tau }=c
$ for some constant $c\neq 0$. The coupling variable now takes the following form%
\begin{equation}
U_{n,i}=\left\{ 
\begin{array}{ll}
g(\zeta _{(i^{\ast }-k_{n})\Delta _{n}},\epsilon _{n,i}) & i\in \mathcal{I}%
_{1,n}, \\ 
g(\zeta _{(i^{\ast }-k_{n})\Delta _{n}}+c,\epsilon _{n,i}) & i\in \mathcal{I}%
_{2,n}.%
\end{array}%
\right.  \label{thmc:201}
\end{equation}%
Under Assumption \ref{as:ind}, it is easy to see that, for each $j\in
\{1,2\} $, the variables $(U_{n,i})_{i\in \mathcal{I}_{j,n}}$ are $\mathcal{G%
}_{n}$-conditionally i.i.d., which verifies Assumption \ref{as:c}(i).

We now turn to the remaining conditions in Assumption \ref{as:c}. As in part
(a), we can invoke the standard localization procedure and assume that the $%
\zeta _{t}$ process takes value in a compact set $\mathcal{K}$. Note that 
\begin{equation*}
\zeta _{\tau }-(\zeta _{(i^{\ast }-k_{n})\Delta _{n}}+\Delta \zeta _{\tau
})=\zeta _{\tau -}-\zeta _{(i^{\ast }-k_{n})\Delta _{n}}=o_{p}(1),
\end{equation*}%
where the $o_{p}(1)$ statement follows from the fact that the $\zeta _{t}$
process is c\`{a}dl\`{a}g and $k_{n}\Delta _{n}\rightarrow 0$. Therefore, by
enlarging the compact set $\mathcal{K}$ slightly if necessary, we also have $%
\zeta _{(i^{\ast }-k_{n})\Delta _{n}}+c\in \mathcal{K}$ with probability
approaching 1. Then, we can verify Assumption \ref{as:c}(ii) following the
same argument as in part (a). The verification of Assumption \ref{as:c}(iii)
is also similar.

Finally, we verify the condition in Theorem \ref{thm1}(b) pertaining to the
conditional CDFs. Note that%
\begin{equation*}
Q_{1,n}(x)=F_{\zeta _{(i^{\ast }-k_{n})\Delta _{n}}}(x)\quad \text{and}\quad
Q_{2,n}(x)=F_{\zeta _{(i^{\ast }-k_{n})\Delta _{n}}+c}(x).
\end{equation*}%
It is then easy to see that%
\begin{align*}
2\int \left( Q_{1,n}(x)-Q_{2,n}(x)\right) ^{2}d\overline{Q}_{n}\left(
x\right) \geq \int \left( F_{\zeta _{(i^{\ast }-k_{n})\Delta
_{n}}}(x)-F_{\zeta _{(i^{\ast }-k_{n})\Delta _{n}}+c}(x)\right)
^{2}dF_{\zeta _{(i^{\ast }-k_{n})\Delta _{n}}}(x).
\end{align*}
Since $\zeta _{(i^{\ast }-k_{n})\Delta _{n}}$ takes values in the compact
set $\mathcal{K}$, Assumption \ref{as:ind}(iii) implies that the lower bound
in the above display is bounded away from zero. Hence, $\int
\left( Q_{1,n}(x)-Q_{2,n}(x)\right) ^{2}d\overline{Q}_{n}\left( x\right)
>\delta _{n}$ for any real sequence $\delta _{n}=o(1)$. We have now verified
all conditions for Theorem \ref{thm1}(b), which proves the assertion of part
(b) of Theorem \ref{thmc}. %
\end{proof}%

\bigskip

\begin{proof}[Proof of Theorem \ref{thm2}]%
This proof follows from similar arguments to those used to
prove Theorem \ref{thm1}. For the sake of brevity, we focus on the only
substantial difference, which is how we establish that $\mathbb{P}%
(E_{n}^{c})=o(1)$. Recall that $E_{n}$ denotes the event where the ordered values
of $(U_{n,i})_{i\in \mathcal{I}_{n}}$ and $(\widetilde{Y}_{n,i})_{i\in 
\mathcal{I}_{n}}$ correspond to the same permutation of $\mathcal{I}_{n}$.
In the case of this proof, this result follows from 
\begin{equation*}
\mathbb{P}(E_{n}^{c})~\leq~ \mathbb{P}(\cup _{i\in \mathcal{I}_{n}}\{%
\widetilde{Y}_{n,i}\neq U_{n,i}\})~\leq~ \sum_{i\in \mathcal{I}_{n}}\mathbb{P}(%
\widetilde{Y}_{n,i}\neq U_{n,i})~=~o(1),
\end{equation*}%
where the first inequality follows from $E_{n}^{c}\subseteq \cup _{i\in 
\mathcal{I}_{n}}\{\widetilde{Y}_{n,i}\neq U_{n,i}\}$ and the convergence
follows from the assumption that $\mathbb{P}(\widetilde{Y}_{n,i}\neq
U_{n,i})=o\left( k_{n}^{-1}\right) $ uniformly in $i\in \mathcal{I}_{n}$. %
\end{proof}%

\bigskip

\begin{proof}[Proof of Theorem \ref{thmd}]%

(a) We prove this assertion  by applying Theorem \ref{thm2}(a). We shall
verify the conditions in Theorem \ref{thm2} for $\widetilde{Y}_{n,i}=Y_{n,i}$%
, $U_{n,i}=g(\zeta _{\left( i^{\ast }-k_{n}\right) \Delta _{n}},\epsilon
_{n,i})$, and $\mathcal{G}_{n}=\mathcal{F}_{\left( i^{\ast }-k_{n}\right)
\Delta _{n}}$. By assumption, the variables $\left( \epsilon _{n,i}\right)
_{i\in \mathcal{I}_{n}}$ are i.i.d.\ and independent of $\mathcal{G}_{n}$.
Hence, the variables $(U_{n,i})_{i\in \mathcal{I}_{n}}$ are $\mathcal{G}_{n}$%
-conditionally i.i.d.

It remains to verify that $\mathbb{P}\left( Y_{n,i}\neq U_{n,i}\right)
=o\left( k_{n}^{-1}\right) $ uniformly in $i\in \mathcal{I}_{n}$. By repeating the localization argument used in the proof of Theorem \ref{thmc}, we can strengthen Assumption \ref{as:sd} with $T_{1}=\infty $
without loss of generality. In particular, $\zeta _{t}$ takes values in some
compact subset $\mathcal{K}\subseteq \mathcal{Z}$. Note that for each $%
i\in \mathcal{I}_{n}$, $\epsilon _{n,i}$ is independent of $\left( \zeta
_{i\Delta _{n}},\zeta _{(i^{\ast }-k_{n})\Delta _{n}}\right) $. By
Assumption \ref{as:sd}(i), we thus have $\mathbb{P}\left( Y_{n,i}\neq
U_{n,i}|\mathcal{G}_{n}\right) \leq K\left\Vert \zeta _{i\Delta _{n}}-\zeta
_{\left( i^{\ast }-k_{n}\right) \Delta _{n}}\right\Vert $. Then, by
Assumption \ref{as:sd}(ii), we further have $\mathbb{P}\left( Y_{n,i}\neq
U_{n,i}\right) \leq K\left( k_{n}\Delta _{n}\right) ^{1/2}$. The condition $%
\mathbb{P}\left( Y_{n,i}\neq U_{n,i}\right) =o\left( k_{n}^{-1}\right) $
then follows from $k_{n}^{3}\Delta _{n}=o(1)$. By Theorem \ref{thm2}(a), we
have $\mathbb{E}[\hat{\phi}_{n}]\rightarrow \alpha $ as asserted.

(b)  We prove this assertion  by applying Theorem \ref{thm2}(b). We
verify the conditions in Theorem \ref{thm2} for $\widetilde{Y}_{n,i}=Y_{n,i}$%
, $\mathcal{G}_{n}=\mathcal{F}_{\left( i^{\ast }-k_{n}\right) \Delta _{n}}$,
and 
\begin{equation*}
U_{n,i}=\left\{ 
\begin{array}{ll}
g(\zeta _{\left( i^{\ast }-k_{n}\right) \Delta _{n}},\epsilon _{n,i}) & 
\text{if }i\in \mathcal{I}_{1,n}, \\ 
g(\zeta _{\left( i^{\ast }-k_{n}\right) \Delta _{n}}+c,\epsilon _{n,i}) & 
\text{if }i\in \mathcal{I}_{2,n}.%
\end{array}%
\right.
\end{equation*}%
Following the same argument as in part (a), we see that $(U_{n,i})_{i\in 
\mathcal{I}_{j,n}}$ are $\mathcal{G}_{n}$-conditionally i.i.d.\ for each $%
j\in \{1,2\}$, and $\mathbb{P}(Y_{n,i}\neq U_{n,i})=o\left(
k_{n}^{-1}\right) $ uniformly in $i\in \mathcal{I}_{n}$. Assumption \ref%
{as:ind}(iii) also ensures that $\mathbb{P}(\int \left(
Q_{1,n}(x)-Q_{2,n}(x)\right) ^{2}d\overline{Q}_{n}\left( x\right) >\delta
_{n})\rightarrow 1$ for any real sequence $\delta _{n}=o(1)$. By Theorem \ref%
{thm2}(b), we have that $\mathbb{E}[\hat{\phi}_{n}]\rightarrow 1$, as asserted. %
\end{proof}%

\begin{center}
	\textsc{{Supplement: Extension to other test statistics}}
\end{center}
As explained in Remark \ref{rem:otherStat}, our main results extend beyond the Cram\'er-von Mises statistic in \eqref{eq:CVM}. We characterize the relevant class of test statistics by the following high-level assumption.

\begin{assumption}\label{ass:testStat}
$\widehat{T}_{n}\equiv \Psi _{n}( ( Y_{n,i}) _{i\in \mathcal{I}_{n}}) $, where $( \Psi _{n}) _{n\in \mathbb{N} }$ is a sequence of functions that satisfies the following conditions:

(a)  $\widehat{T}_{n}$ is a rank statistic, i.e., for any $( Y_{n,i}) _{i\in \mathcal{I}_{n}}$ and $( Y_{n,i}^{\prime }) _{i\in \mathcal{I}_{n}}$ with $sign( Y_{n,i}-Y_{n,j}) =sign( Y_{n,i}^{\prime }-Y_{n,j}^{\prime }) $ for all $i,j\in \mathcal{I}_{n}$, $\Psi _{n}( ( Y_{n,i}) _{i\in \mathcal{I}_{n}}) =\Psi _{n}( ( Y_{n,i}^{\prime }) _{i\in \mathcal{I}_{n}}) $.

(b)  For any $( Y_{n,i})_{i\in \mathcal{I}_{n}}$ and $( Y_{n,i}^{\prime }) _{i\in \mathcal{I}_{n}}$, $\vert \Psi _{n}( ( Y_{n,i}) _{i\in \mathcal{I}_{n}}) -\Psi _{n}( ( Y_{n,i}^{\prime })_{i\in \mathcal{I}_{n}}) \vert =O_{p}(D_{n}/k_{n}) $ with $D_{n}=\vert \{i\in \mathcal{I}_{n}:Y_{n,i}\not=Y_{n,i}^{\prime }\} \vert $.
\end{assumption}

Assumption \ref{ass:testStat} is satisfied for a large class of test statistics, which includes the Cram\'er-von Mises and Kolmogorov-Smirnov statistics. In fact, the proof of Theorem \ref{thm1} shows that the Cram\'er-von Mises statistic satisfies Assumption \ref{ass:testStat}, and an analogous argument can be used to extend this to the Kolmogorov-Smirnov statistic. We now briefly describe the assumption. Assumption \ref{ass:testStat}(a) is an essential ingredient to our methodology. In turn, Assumption \ref{ass:testStat}(b) is a mild regularity condition that limits the influence that a few sample observations can have on the test statistic, and is only required to establish our consistency result.

The result proves Theorem \ref{thm1} for any test statistic that satisfies Assumption \ref{ass:testStat}. Since Theorem \ref{thm1} is the key to all of the results in the paper, this effectively implies that our findings extend to the class of statistics characterized by Assumption \ref{ass:testStat}, as claimed in Remark \ref{rem:otherStat}.

\begin{theorem}\label{thm1_high}
Under Assumptions \ref{as:c} and \ref{ass:testStat} (instead of \eqref{eq:CVM}),

(a) If the variables $( U_{n,i}) _{i\in \mathcal{I}_{n}}$ have the same $\mathcal{G}_{n}$-conditional distribution, we have $\mathbb{E} [ \hat{\phi}_{n}] \to \alpha $.

(b) Let $\widehat{T}_{n}( U) $ denote the test statistic but applied to $( U_{n,i}) _{i\in \mathcal{I}_{n}}$ instead of $ ( Y_{n,i}) _{i\in \mathcal{I}_{n}}$. If $k_{n}\to \infty $ and $\mathbb{P}( \widehat{T}_{n}( U) >\delta _{n}) \to 1$ for any real sequence $\delta _{n}=o( 1) $, we have $\mathbb{E}[ \hat{\phi}_{n}] \to 1$.
\end{theorem}
\begin{proof}
This proof follows closely that of Theorem \ref{thm1}, which has two steps. Step 1 remains unchanged, as it only relies on Assumption \ref{as:c} and the fact that $\widehat{T}_{n}$ is a rank statistic, imposed in Assumption \ref{ass:testStat}(a). Part (a) of Step 2 also remains unchanged, as it is entirely based on Step 1. To complete this proof, it then suffices to cover the analog of part (b) of Step 2.


We begin by considering the asymptotic behavior of $\widehat{T}_{n}^{\ast }( U) $, i.e., the $1-\alpha $ quantile of $\widehat{T}_{n}( U_{\tilde{ \pi}}) $, conditional on the sample, where the randomness comes from the realization of $\tilde{\pi}$. As in the proof of Theorem \ref{thm1}, we rely on the coupling construction based on \citet[Section 5.3]{chungromano2013}, which produces a random sequence $ (U_{n,i}^{\prime })_{i\in \mathcal{I}_{n}}$ such that (i) $ U_{n,i}=U_{n,i}^{\prime }$ for all $i$ in some random subset $\mathcal{I} _{n}^{\prime }\subseteq \mathcal{I}_{n}$; (ii) $D_{n}=\vert \{ i\in \mathcal{I}_{n}:U_{n,i}\not=U_{n,i}^{\prime }\} \vert $ satisfies $\mathbb{E}[D_{n}]=O(k_{n}^{1/2})$; and (iii) $(U_{n,i}^{\prime })_{i\in \mathcal{I}_{n}}$ are $\mathcal{G}_{n}$-conditionally i.i.d. Then, for any fixed arbitrary permutation $\pi $,
\begin{equation}
|\widehat{T}_{n}(U_{\pi })-\widehat{T}_{n}(U_{\pi }^{\prime })|=\vert \Psi _{n}( ( U_{n,\pi ( i) }) _{i\in \mathcal{I} _{n}}) -\Psi _{n}( ( U_{n,\pi ( i) }^{\prime }) _{i\in \mathcal{I}_{n}}) \vert =O_{p}( D_{n}/k_{n}^{1/2+\varepsilon }) =o_{p}( 1) ,
\label{eq:high2}
\end{equation}
where the second equality relies on $\vert \{ i\in \mathcal{I} _{n}:U_{n,\pi ( i) }\not=U_{n,\pi ( i) }^{\prime }\} \vert \leq D_{n}$ and Assumption \ref{ass:testStat}(b), and the last equality relies on $\mathbb{E}[D_{n}]=O(k_{n}^{1/2})$, $ k_{n}\to \infty $, and Markov's inequality. 
We can then repeat the arguments in the proof of Theorem \ref{thm1} to conclude that $\widehat{T}_{n}(U_{\pi }^{\prime })=o_{p}( 1) $ for any fixed arbitrary permutation $\pi $. By this with \eqref{eq:high2}, we conclude that $\widehat{T} _{n}(U_{\pi })=o_{p}( 1) $. Since $\pi $ was arbitrarily chosen, it then follows that
\begin{equation}
\widehat{T}_{n}^{\ast }( U) =o_{p}( 1) . \label{eq:high3}
\end{equation}

From the assumption in part (b) and \eqref{eq:high3}, it is easy to see that $\widehat{T}_{n}( U) >\widehat{T}_{n}^{\ast }( U) $ with probability approaching one. We can then repeat the remaining arguments in the proof of Theorem \ref{thm1} to complete this proof.
\end{proof}

\bibliography{BIBper.bib}

\end{document}


\begin{titlepage}
	\title{Supplemental Material to\\Permutation Tests for High-frequency Event Study\footnote{This research is conducted while Li is a visiting professor at Yale University and the Cowles foundation. Qiushi Zhang provided excellent research assistance.}}
	
	\date{\today}
	
	\vspace{1.0cm}
	\author{\mbox{}\\\mbox{}\\Federico Bugni\thanks{Department of Economics, Duke University, Durham, NC 27708; e-mail: federico.bugni@duke.edu.}~~and~~Jia Li\thanks{Department of Economics, Duke University, Durham, NC 27708; e-mail: jl410@duke.edu.}
		\\
		\mbox{}\\\mbox{} }
\end{titlepage}

\maketitle

\begin{abstract}
\thispagestyle{empty}\noindent This supplement contains addition simulation results.\newline
\newline

\end{abstract}

\newpage

\setcounter{equation}{0} 
\setcounter{lemma}{0} 

\renewcommand{\thesection}{S\Alph{section}}
\renewcommand{\theequation}{S\Alph{section}.\arabic{equation}}
\renewcommand{\thetheorem}{S\Alph{section}\arabic{theorem}}
\renewcommand{\thelemma}{S\Alph{section}\arabic{lemma}}
\renewcommand{\theassumption}{S\Alph{section}\arabic{assumption}}
\renewcommand{\theproposition}{S\Alph{section}\arabic{proposition}}
\renewcommand{\thecorollary}{S\Alph{section}\arabic{corollary}}

\input{supp1}


{\small 
\bibliographystyle{econometrica}
\bibliography{strong,fbm}
}